\theoremstyle{plain}
\newtheorem{Th}{Theorem}[section]
\newtheorem{Cor}[Th]{Corollary}
\newtheorem{Prop}[Th]{Proposition}
\theoremstyle{definition}
\newtheorem{Ex}{Example}[section]
\theoremstyle{remark}
\newtheorem*{Rem}{Remark}%[section]
\numberwithin{equation}{section}
\newcommand{\ZZ}{{\mathbb Z}}
\newcommand{\bphi}{\boldsymbol{\phi}}
\begin{document}

\title[The Coxeter relations and KP map for non-commuting symbols]
{The Coxeter relations and KP map  for non-commuting symbols}

\author{Adam Doliwa}

\address{A. Doliwa: Faculty of Mathematics and Computer Science, University of Warmia and Mazury in Olsztyn,
ul.~S{\l}oneczna~54, 10-710~Olsztyn, Poland} 
\email{doliwa@matman.uwm.edu.pl}
\urladdr{http://wmii.uwm.edu.pl/~doliwa/}

\author{Masatoshi Noumi}
\address{M. Noumi: Department of Mathematics, Kobe University, Rokko, Kobe 657-8501, Japan}
\email{noumi@math.kobe-u.ac.jp}

\date{}
\keywords{discrete integrable systems; non-commutative Hirota--Miwa equation; Yang--Baxter maps; symmetric group; Mal'cev--Neumann series, context-free languages}
\subjclass[2010]{37K10, 37K60, 16T25, 39A14, 14E07, 12E15}

\begin{abstract}
We give an action of the symmetric group on non-commuting indeterminates in terms of series in the corresponding Mal'cev--Newmann division ring. The action is constructed from  the non-Abelian Hirota--Miwa (discrete KP) system. The corresponding companion map, which gives generators of the action, is discussed in the generic case and the corresponding explicit formulas have been found in the periodic reduction. We discuss also briefly connection of the companion to the KP map with context-free languages.
\end{abstract}
\maketitle

\section{Introduction}
Non-commutative extensions of integrable systems are of growing interest in mathematical physics~\cite{FWN-Capel,Kupershmidt,BobSur-nc,Nimmo-NCKP,EtingofGelfandRetakh,DF-K,Konstantinou-Rizos,GrYB-GKRM,Dol-Des,Dol-GD,Dol-qP6}. They may be considered as a useful platform to more thorough understanding of integrable quantum or statistical mechanics lattice systems, where the quantum Yang--Baxter equation~\cite{Baxter,QISM} plays a role. Recently, integrable maps with variables satisfying various commutation relations (including the canonical Weyl relations and those coming from the quantum group theory) have been studied in \cite{Hasegawa,Kuroki,Sergeev-q3w,DoliwaSergeev-pentagon,BazhanovSergeev,Tsuboi}. 

Such a direction from commutative to non-commutative objects in the theory of integrable systems is not an isolated issue. One can say it is in mainstream of current developments in contemporary mathematics. The origins of this trend are rooted back in quantum physics, where the idea of replacing functions by not necessarily commuting operators appeared. There is a topology of non-commutative spaces, a non-commutative integration theory, a non-commutative differential geometry \cite{Connes}, non-commutative probability theory~\cite{FreeProbability}, etc. The present research aims to make a step in direction to create a kind of free integrability. The advantage of working with totally non-commutative variables has been advocated also in~\cite{Quasideterminants}. As the geometric research \cite{Dol-Des} suggests, the algebraic environment of our work will be the theory of Mal'cev--Neumann division rings constructed from free groups.

In this paper we study maps obeying the Coxeter relations, which are obtained from the non-commutative discrete KP system of equations (called originally in \cite{Nimmo-NCKP} the non-Abelian Hirota--Miwa system). In the commutative case such maps were studied in~\cite{KNY-A,Etingof,SurisVeselov}, see also~\cite{Doliwa-YB} for a version with certain commutativity restrictions.  We remark that  Hirota's discrete KP equation~\cite{Hirota-1983} gives as reductions majority of known integrable systems~\cite{KNS-rev}, both discrete and continuous.
The $A$-type Coxeter relations~\cite{Humphreys} considered in the paper are closely related to the notion of Yang--Baxter maps (originally \cite{Drinfeld} called the set-theoretical solutions of quantum Yang--Baxter equation). We will use, well established by now \cite{ABS-YB},
correspondence between the Yang--Baxter maps and the multidimensional consistency property of integrable equations on quad-graphs \cite{ABS,FWN-cons}.

The paper is organized as follows. In the next section we first collect relevant results about the Mal'cev--Neumann construction of division rings. Then we discuss the Coxeter relations in connection to the multidimensional consistency. We close the preparatory Section~\ref{sec:prelim} with discussion of properties of a map arising from the non-Abelian Hirota--Miwa (discrete KP) system. In Section \ref{sec:KP-comp-map} we study the companion map to the above one. The corresponding explicit formulas in the case of periodic reduction, given in Section~\ref{sec:KP-series}, generalize to fully non-commutative level the maps studied in~\cite{KNY-A,Doliwa-YB}.  We discuss also in Section~\ref{sec:CF} relation of the series representing the maps to context-free languages. A surprising consequence is that, although the original non-commutative KP map is given by a rational expression, the series which give the companion map are not rational. 

\section{Preliminaries} \label{sec:prelim}
\subsection{The Mal'cev--Neumann construction and rational series over free group}
Let $\Bbbk$ be a field (in our case the field of rational numbers $\mathbb{Q}$), and let $G$ be a group. Consider the $\Bbbk$-vector space $\Bbbk[[G]]$ of formal series over $G$ with coefficients in $\Bbbk$, where a series $S\in\Bbbk[[G]]$ is just a map  $G\ni g \mapsto (S,g)\in \Bbbk$ represented by
\begin{equation*}
S = \sum_{g\in G} (S,g) g .
\end{equation*} 
The support of the series $S$ is then the set $\mathrm{supp}(S) = \{g \in G \, | \, (S,g) \neq 0\}$. Series with finite support can be multiplied using the formula
\begin{equation} \label{eq:Cauchy-prod}
(ST,h) = \sum_{fg=h} (S,f) (T,g), 
\end{equation} 
and form the group algebra $\Bbbk[G]$.

Assume that $G$ is totally ordered by the order $<$ compatible with its group structure, i.e. 
\begin{equation*}
\forall f,g,h\in G, \qquad f < g \quad \Rightarrow \quad  \begin{cases}
fh < gh .\\
hf < hg .
\end{cases}
\end{equation*}
Recall that $ P \subset G$ is well ordered if every non-empty subset of $P$
admits a smallest element for the order of $G$. The set $\Bbbk((G,<))$ of Mal'cev--Neumann series it is the set of the series in $\Bbbk[[G]]$ whose supports
are well ordered (with respect to the given order $<$). One can equip $\Bbbk((G,<))$ with a $\Bbbk$-algebra structure by defining the (Cauchy) product of two Mal'cev--Neumann  series $S$ and $T$ by \eqref{eq:Cauchy-prod}.
The remarkable result by Mal'cev~\cite{Malcev}, and independently by Neumann~\cite{Neumann}, is that this algebra is a division ring (skew field).
Moreover, the space of Mal'cev--Neumann  series is complete in the natural ultra-metric topology \cite{Neumann}, where a sequence converges if and only if the coefficient of every monomial stabilizes.

\begin{Ex} \label{ex:Coxeter-QN}
	Let $G=\mathbb{Z}$ with its standard order, then 
	\begin{equation*}
	\Bbbk((\mathbb{Z},<)) = \left\{ \sum_{n=d}^\infty a_n x^n \, | \, a_n\in \Bbbk, \; d\in\mathbb{Z} \right\},
	\end{equation*}
	thus, we obtain the usual formal Laurent series (bounded from below) in one variable over $\Bbbk$. If we took the
	opposite order on $\mathbb{Z}$, we would obtain the Laurent series in $x^{-1}$ over $\Bbbk$.
\end{Ex}

The free group $\Gamma = \Gamma (x_1,x_2,\dots)$  with generators $x_1,x_2, \dots$, possibly an infinite list, can be regarded as the set of equivalence classes of finite words in the letters $x_i$ and their formal inverses $x_i^{-1}$, where the words are considered equivalent if one can pass from one to the other by removing (or inserting) consecutive letters of the form $x_ix_i^{-1}$ or $x_i^{-1}x_i$. The group operation is concatenation and the empty word represents the identity element. By fixing a total order $<$ in $\Gamma$ compatible with its group structure one can consider the corresponding  Mal'cev--Neumann division ring $\Bbbk((\Gamma,<))$.

There are number of ways to order free groups \cite{Vinogradov} --- in fact for free groups with more than one generator there are uncountably many. The standard one, described in \cite{Bergman} in terms of the Magnus embedding~\cite{Magnus} is as follows. 
Define the ring
$$\Lambda = \mathbb{Z}[[X_1,X_2,\dots ]]$$
to be the ring of formal power series in the non-commuting variables $X_i$, one for each generator of $\Gamma$. If there are infinitely many variables, we only allow for expressions involving a finite set of variables to belong to $\Lambda$, so that an element of $\Lambda$ has only a finite number of terms of a given degree. The advantage of $\Lambda$ is that the variables have no negative exponents (unlike in $\Gamma$) and so it is easier to define an ordering, without having to worry about cancellation problems. Define the (multiplicative) homomorphism $\mu\colon\Gamma\to\Lambda$ which on the generators of $\Gamma$ is given by
\begin{align*}	
\mu(x_i) & = 1 + X_i \, ,\\
\mu(x_i^{-1}) & = 1 - X_i + X_i^2 - X_i^3 + \cdots . 
\end{align*}
To compare two elements in $\Lambda$ 
write them with terms of lower degree first, and lexicographically order terms of the same degree (this is the so called shortlex order). Then the two elements are ordered according to the coefficient of the first term at which they differ. Finally, define the ordering on the free group $\Gamma$ by declaring
\begin{equation*}
v < w \quad \text{in} \quad \Gamma \Leftrightarrow \; \mu(v) < \mu(w) \quad \text{in} \quad \Lambda .
\end{equation*}
\begin{Ex}
	To compare $x_1x_2^2x_1^{-1}$ and $x_1^{-1}x_2^2x_1$ notice that the series
	\begin{align*}
	\mu(x_1x_2^2x_1^{-1}) & = 1 + 2X_2 + 2X_1 X_2 - 2X_2 X_1 + X_2^2 + \cdots \\
	\mu(x_1^{-1}x_2^2x_1) & = 1 + 2X_2 - 2X_1 X_2 + 2X_2 X_1 + X_2^2 + \cdots
	\end{align*} 
	have different coefficients first at $X_1X_2$. Because $-2 < 2$ in $\mathbb{Z}$ then we have $x_1^{-1}x_2^2x_1 < x_1x_2^2x_1^{-1}$ .
\end{Ex}

Consider the ordered free group $(\Gamma,<)$ with finite number of generators.
The smallest sub-division ring in $\Bbbk((\Gamma,<))$ containing the group algebra $\Bbbk[\Gamma]$ is called the division ring of fractions. It turns out that the result is independent of the particular order $<$ used in the construction, in the sense that it is isomorphic \cite{Lewin} to the  universal field of fractions,
%$\Bbbk < \!\! \!\! \! ( \, A > \!\! \!\! )$
called also \cite{Cohn} the free skew field.

\subsection{Coxeter relations and multidimensional consistency} \label{sec:Coxeter}
Consider the group given by generators  $\langle r_1, r_2 , \dots , r_{N-1} \rangle $ subject to the following relations (of type $A_{N-1}$), see \cite{Humphreys} for details
\begin{align} \label{eq:C-inv}
r_i\circ r_i & = \mathrm{id} \qquad \text{(involutivity)},\\
r_{i}\circ r_{i+1} \circ r_{i} & = r_{i+1}\circ r_{i}\circ r_{i+1}, 
\qquad \text{(braid relations)}, 
\label{eq:C-braid} \\ \label{eq:C-ind}
r_i \circ r_j & = r_j \circ r_i \quad |i-j| > 1. 
\end{align}
It is well known that such a group is isomorphic to the symmetric group $\mathcal{S}_{N}$, and in this interpretation $r_i$ can be identified with the transposition $(i,i+1)$. The standard geometric realization of the group is the symmetry group of the regular $N$-simplex, where $r_i$ can be identified with the reflection exchanging vertices $i$ with $i+1$.

Let us present another geometric realization of the symmetric group, which will be used in the sequel. 
Consider the $N$-dimensional cube graph $Q_N$, whose vertices can be identified with binary sequences of length $N$, and two vertices belong to an edge if their sequences differ at one place only. The edges can be given an orientation to the vertex with bigger number of ones, and the corresponding index $i=1,\dots,N$ indicating at which position of the sequence the transition $0\to 1$ occurs. This way any oriented path from the initial vertex $(0,0,\dots,0)$ to the terminal vertex $(1,1,\dots,1)$ of the graph $Q_N$ gives rise to a sequence $(\sigma(1),\sigma(2),\dots,\sigma(N))$ of indices of subsequent edges, and thus to the permutation $\sigma\in\mathcal{S}_N$. The left regular action of $\mathcal{S}_N$ on itself gives the action on the paths. In what follows we attach to each edge a weight, and we extend the action of the generators $r_i$ of the symmetric group to the corresponding action on weighted paths.

Let $\mathcal{X}$ be any set, consider map $f\colon \mathcal{X} \times \mathcal{X} \ni (\boldsymbol{x}_1,\boldsymbol{x}_2) \mapsto 
(f_1(\boldsymbol{x}_1,\boldsymbol{x}_2), f_2(\boldsymbol{x}_1,\boldsymbol{x}_2)) \in  \mathcal{X} \times \mathcal{X}$. We would like to interpret the map as providing shifts in discrete variables for fields attached to the edges of a square lattice, so that $\boldsymbol{x}_{1(2)} = f_1(\boldsymbol{x}_1,\boldsymbol{x}_2)$, $\boldsymbol{x}_{2(1)} = f_2(\boldsymbol{x}_1,\boldsymbol{x}_2)$ as graphically depicted in Figure~\ref{fig:3D-GD-x}. In order to keep such an interpretation for multidimensional lattices, the map should satisfy compatibility conditions equivalent to the commutation of shifts in different directions.
\begin{figure}
	\begin{center}
		\includegraphics[width=12cm]{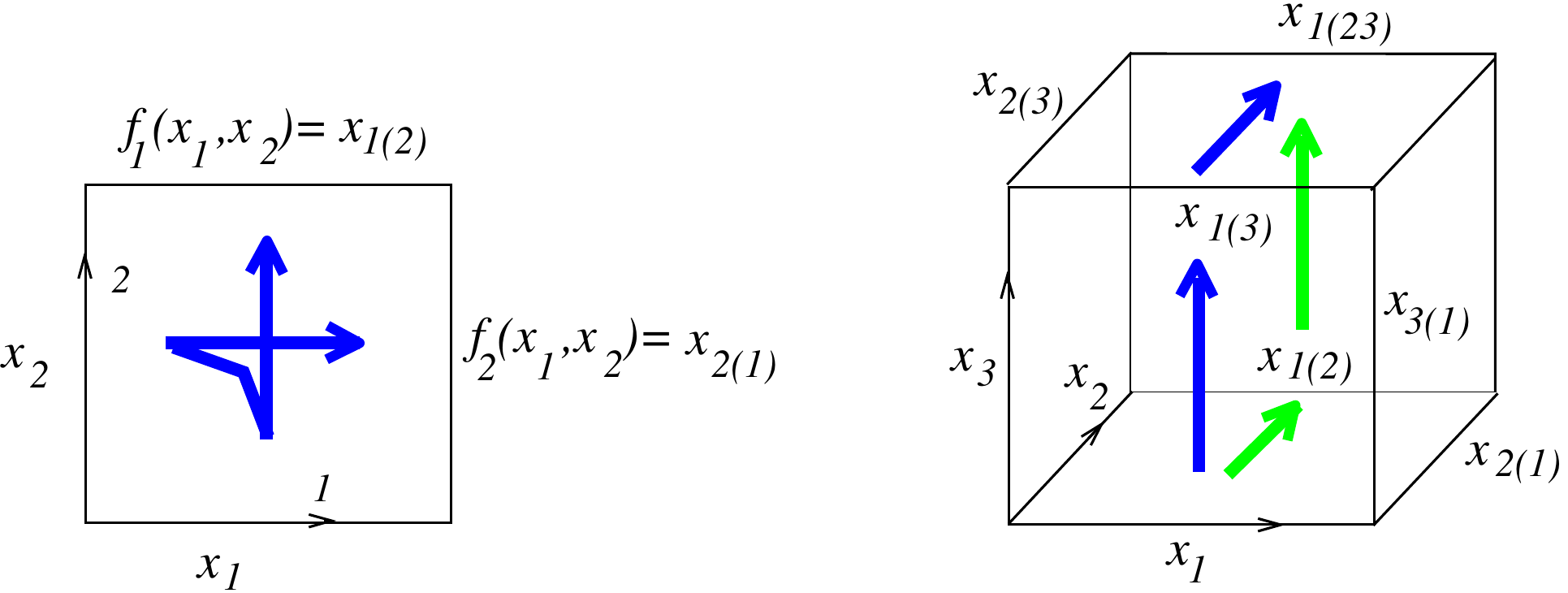}
	\end{center}
	\caption{An edge-map and its consistency around the cube} 
	\label{fig:3D-GD-x}
\end{figure}
 
The map $f$ is called consistent around the cube~\cite{ABS,FWN-cons} if for arbitrary values of the fields $(\boldsymbol{x}_1,\boldsymbol{x}_2,\boldsymbol{x}_3)$ attached to initial edges (containing the vertex $(0,0,0)$) of the cube $Q_3$ 
\begin{align}
f_1(f_1(\boldsymbol{x}_1,\boldsymbol{x}_3), f_1(\boldsymbol{x}_2,\boldsymbol{x}_3) )& = f_1(f_1(\boldsymbol{x}_1,\boldsymbol{x}_2), f_2(\boldsymbol{x}_2,\boldsymbol{x}_3) ) \stackrel{\mathrm{def}}{=} \boldsymbol{x}_{1(23)},\\
f_1(f_2(\boldsymbol{x}_1,\boldsymbol{x}_2), f_2(\boldsymbol{x}_1,\boldsymbol{x}_3) )& = f_2(f_1(\boldsymbol{x}_1,\boldsymbol{x}_3), f_1(\boldsymbol{x}_2,\boldsymbol{x}_3) )
\stackrel{\mathrm{def}}{=} \boldsymbol{x}_{2(13)} ,\\
f_2(f_2(\boldsymbol{x}_1,\boldsymbol{x}_2), f_2(\boldsymbol{x}_1,\boldsymbol{x}_3) )& = f_2(f_1(\boldsymbol{x}_1,\boldsymbol{x}_2), f_2(\boldsymbol{x}_2,\boldsymbol{x}_3) )
\stackrel{\mathrm{def}}{=} \boldsymbol{x}_{3(12)},
\end{align}
i.e. for any final edge (containing the vertex $(1,1,1)$) of the cube two possible values of the corresponding field calculated using the map coincide, see Figure~\ref{fig:3D-GD-x}. This concept is the crucial one in the theory of discrete integrable systems. In particular, it implies possibility of attaching in a consistent way values of the field on all edges of the multidimensional cube $Q_N$ once its values on initial edges have been given. 

Assume that multidimensionally consistent map $f$ is invertible and admits existence of its unique companion map $r$ defined by
\begin{equation}
r(\boldsymbol{x}, \boldsymbol{y}) = (\boldsymbol{x}^\prime, \boldsymbol{y}^\prime), \qquad \text{where} \qquad (\boldsymbol{y}^\prime, \boldsymbol{y}) = f (\boldsymbol{x}, \boldsymbol{x}^\prime).
\end{equation}
It is known~\cite{ABS-YB}, that the multidimensional consistency in that case implies braid relations for the companion map in the form 
\begin{equation} \label{eq:braid-r}
r_{2} \circ r_{1} \circ r_{2} = r_{1} \circ r_{2} \circ r_{1}, 
 \qquad \text{in} \qquad \mathcal{X} \times \mathcal{X}  \times \mathcal{X}  ,
\end{equation}
where  $r_1 = r \times \mathrm{id}_{\mathcal{X}}$ and $r_2 = \mathrm{id}_{\mathcal{X}} \times r$.
\begin{figure}
	\begin{center}
		\includegraphics[width=14cm]{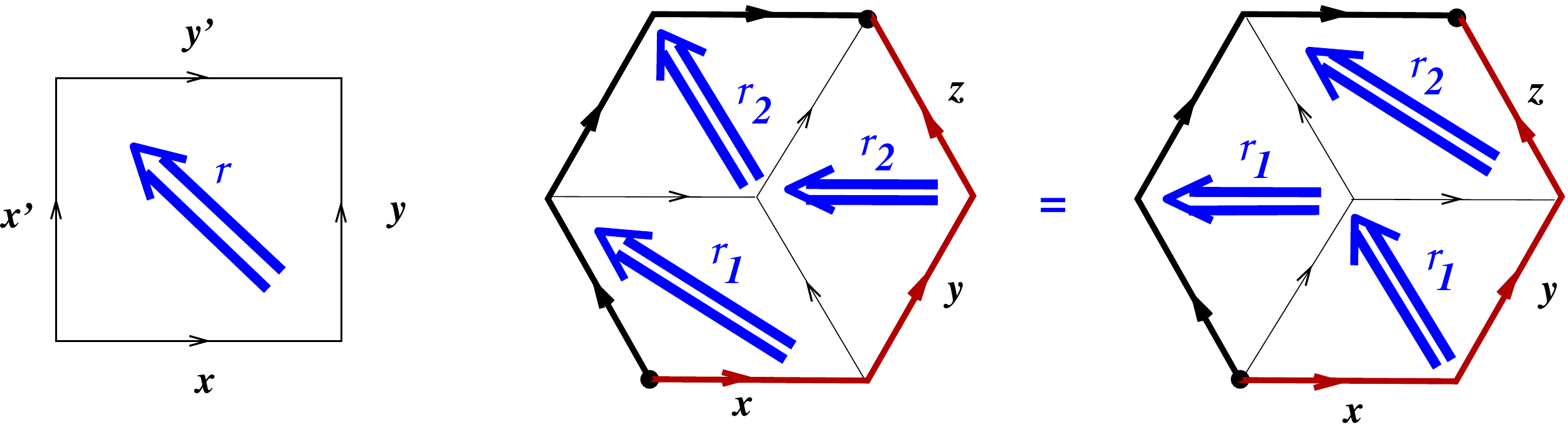}
	\end{center}
	\caption{The companion map and braid relation $r_{2} \circ r_{1} \circ r_{2} = r_{1} \circ r_{2} \circ r_{1}$ in the edge interpretation} 
	\label{fig:2D-EDGE-B}
\end{figure}
Actually, the connection has been formulated in~\cite{ABS-YB} in terms of the 
the Yang--Baxter relations
 \begin{equation}
 R_{12}\circ R_{13} \circ R_{23} = R_{23}\circ R_{13}\circ R_{12}  \qquad \text{in} \qquad \mathcal{X} \times \mathcal{X}  \times \mathcal{X} ,
 \end{equation}
where $R=\pi \circ r$, $\pi$ is the transposition (i.e. ($R(\boldsymbol{x}, \boldsymbol{y}) = (\boldsymbol{y}^\prime, \boldsymbol{x}^\prime)$), and subscripts denote the arguments on which $R$ acts. We remark that eventual involutivity of $r$ means the so called reversibility condition
 \begin{equation}
 \pi \circ R \circ \pi \circ R = \mathrm{id}_{\mathcal{X} \times \mathcal{X} } ,
 \end{equation}
 on the level of the corresponding Yang--Baxter map.
 
\subsection{The non-commutative KP hierarchy and the KP map}
The linear problem for the non-commutative difference KP hierarchy \cite{Doliwa-YB} ( see also \cite{KNY-qKP} for the commutative case) reads as follows
\begin{equation} \label{eq:lin-lKPh}
\bphi_{k+1} - \bphi_{k(i)} = \bphi_k u_{i,k}, 
\end{equation}
here $\bphi_{k}(n_1,n_2, \dots )$, is $k$-th component, $k\in\ZZ$, of the wave function with values in a division ring, $n_i\in\ZZ$ are independent discrete variables, and index in brackets denotes the forward shift in the corresponding variable, i.e.
$\bphi_{k(i)}(n_1,n_2, \dots , n_i , \dots ) = \bphi_{k}(n_1,n_2, \dots , n_i +1 ,\dots )$. The compatibility condition of equations \eqref{eq:lin-lKPh} gives the so called non-commutative difference KP hierarchy
\begin{align} \label{eq:ncKP-1}
& u_{j,k}u_{i,k(j)} = u_{i,k} u_{j,k(i)}, \qquad i\neq j,\\
& u_{i,k(j)} + u_{j,k+1} = u_{j,k(i)} + u_{i,k+1}. \label{eq:ncKP-2}
\end{align}
It is not difficult to find explicit form of the transformation rule
\begin{equation} \label{eq:KP-u-solved}
u_{i,k(j)} = ( u_{i,k} - u_{j,k})^{-1} u_{i,k} ( u_{i,k+1} - u_{j,k+1}), \qquad i\neq j, 
\end{equation} 
called in \cite{Doliwa-YB} the non-commutative KP map. Let us attach the vector $\boldsymbol{u}_i = (u_{i,k})_{k\in\ZZ}$ to the corresponding $i$-th edge of multidimensional cube (see Figure~\ref{fig:3D-GD-x}). Then the map involves edges of a single quadrilateral of the cube. Our paper is based on the following result~\cite{Dol-Des-red} equivalent to integrability of the non-commutative KP hierarchy 
\begin{Prop}
The KP map is multidimensionally consistent.
\end{Prop}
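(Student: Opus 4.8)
The plan is to read multidimensional consistency off the linear problem \eqref{eq:lin-lKPh}, of which the KP map \eqref{eq:KP-u-solved} is the solved compatibility condition, instead of attacking the rational expression head-on. Regard the collection $(u_{i,k})$ as a discrete connection: equation \eqref{eq:lin-lKPh} transports the wave function $\bphi_k$ across an edge, and solving it gives $u_{i,k}=\bphi_k^{-1}(\bphi_{k+1}-\bphi_{k(i)})$. Eliminating $u_{j,k(i)}$ between \eqref{eq:ncKP-1} and \eqref{eq:ncKP-2} isolates $u_{i,k(j)}$ in exactly the form \eqref{eq:KP-u-solved}, so that prescribing the shifted edge data by the map is the same as imposing the square (flatness) relation $\bphi_{k(i)(j)}=\bphi_{k(j)(i)}$ on each elementary face. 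In these terms the three consistency relations around $Q_3$ amount to the single symmetry
\[
u_{i,k(j)(l)} = u_{i,k(l)(j)}, \qquad i,j,l \text{ pairwise distinct}, \quad k\in\ZZ,
\]
i.e.\ to flatness of the connection on the elementary three-dimensional cube, equivalently to path-independence of $\bphi$ at the terminal vertex.

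First I would fix arbitrary initial data $\boldsymbol{u}_1,\boldsymbol{u}_2,\boldsymbol{u}_3$ on the edges through $(0,0,0)$ together with an invertible seed $(\bphi_k)_{k\in\ZZ}$, propagate $\bphi$ along the connection by $\bphi_{k(i)}=\bphi_{k+1}-\bphi_k u_{i,k}$, and define the edge data on the second- and third-layer edges by repeated application of the map. By the previous paragraph $\bphi$ is then single-valued on each of the three faces through $(0,0,0)$, since the map solves the face relation there by construction. Reading $u_{i,k}=\bphi_k^{-1}(\bphi_{k+1}-\bphi_{k(i)})$ back off a $\bphi$ that turns out to be single-valued at $(1,1,1)$ would then give the desired symmetry and hence all three consistency relations.

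The step I expect to be the main obstacle is exactly the passage from the three faces through the origin to the far corner: for the two integrations of $\bphi$ meeting at $(1,1,1)$ to agree one needs the face relation to hold also on the three \emph{opposite} faces of the cube, i.e.\ the flatness of the connection must be preserved by the transverse shift. This propagation --- that the second-layer data produced by the map again satisfies \eqref{eq:ncKP-1}--\eqref{eq:ncKP-2} on the transverse faces, a discrete Bianchi-type identity --- is not formal; it is the genuine integrability content of the system. I would establish it by checking that $\bphi$ integrated to a second-layer vertex along either adjoining face coincides and that the connection it induces on the remaining face is again flat, the linear problem serving as the bookkeeping device. Carrying this out in the totally non-commutative setting, where the order of the factors and of the inverses in \eqref{eq:KP-u-solved} must be respected and where one must secure the invertibility of the differences $u_{i,k}-u_{j,k}$ and of the $\bphi_k$ inside the Mal'cev--Neumann division ring, is where the real work lies; a purely computational alternative, composing the map with itself and verifying the displayed symmetry by direct manipulation, is available but far less transparent.
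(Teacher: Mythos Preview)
The paper does not give a proof here: the proposition is quoted from \cite{Dol-Des-red}, and the Remark immediately after it indicates that the underlying mechanism is the four-dimensional consistency of the non-Abelian Hirota--Miwa system established in \cite{Dol-Des} via the Desargues configuration in projective geometry over a division ring. So the benchmark you are being compared against is an incidence-geometric argument, not an algebraic one.

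Your reformulation is correct and is in fact very close to that approach: the linear problem \eqref{eq:lin-lKPh} says precisely that $\bphi_k$, $\bphi_{k+1}$ and $\bphi_{k(i)}$ are collinear in the projective sense, and your ``discrete connection / flatness'' language is the Desargues-map picture in analytic dress. You also locate the crux accurately: once $\bphi$ is known at the seven vertices of $Q_3$ of distance $\le 2$ from the origin, consistency is equivalent to the existence of a single $\bphi_{(123)}$ compatible with all three far faces.

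Where the proposal falls short is in the resolution of that crux. Your stated plan --- ``check that the connection induced on the remaining face is again flat, the linear problem serving as the bookkeeping device'' --- is not a new ingredient; it is exactly the direct verification of $u_{i,k(j)(l)}=u_{i,k(l)(j)}$ that you set out to avoid, merely rewritten in terms of $\bphi$. The linear problem does the bookkeeping, but it does not by itself supply the Bianchi-type identity you need: to propagate $\bphi$ to $(1,1,1)$ you must already choose one of the competing values of $u_{3,k(12)}$ (from the face $n_1=1$ or from the face $n_2=1$), and showing the two choices coincide is the whole content of the statement. What closes this gap in \cite{Dol-Des} is an external fact of incidence geometry --- the Desargues theorem, which holds over any skew field --- guaranteeing that the seven collinearities you have constructed force a unique eighth point $\bphi_{(123)}$. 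Without invoking that theorem (or an equivalent algebraic identity), your outline does not yet contain a step that replaces the head-on rational computation; it only postpones it. Your worry about invertibility in the Mal'cev--Neumann ring is, by contrast, not a real obstacle: the initial data are free generators, so the differences $u_{i,k}-u_{j,k}$ and the propagated $\bphi_k$ are automatically nonzero, hence invertible.
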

\begin{Rem}
		Equations of the non-commutative discrete KP hierarchy can be obtained from the  non-Abelian Hirota--Miwa system~\cite{Nimmo-NCKP}, see also \cite{Dol-Des} for geometric meaning of the corresponding linear problem and of the nonlinear equations in terms of the so called Desargues maps.
	Notice that the KP map can be considered as two-dimensional only on the formal level of the infinite-component vector $\boldsymbol{u}_i$. On the local level of the three-dimensional non-Abelian Hirota--Miwa system its four dimensional consistency was established in~\cite{Dol-Des} in connection to the Desargues theorem of projective geometry. Then, instead of the Yang--Baxter equation, one has the so called pentagon equation~\cite{DoliwaSergeev-pentagon}. 
\end{Rem}

\begin{Cor} \label{cor:KP-u-inverse}
The KP map is birational with the inverse given by
\begin{equation} \label{eq:KP-u-inverse}
u_{i,k} = ( u_{i,k-1(j)} - u_{j,k-1(i)}) u_{i,k(j)} ( u_{i,k(j)} - u_{j,k(i)})^{-1}.
\end{equation}
\end{Cor}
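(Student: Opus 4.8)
The plan is to verify by direct substitution that the expression \eqref{eq:KP-u-inverse} inverts the KP map \eqref{eq:KP-u-solved}; the whole argument hinges on recognizing that the linear equation \eqref{eq:ncKP-2} of the hierarchy supplies exactly the simplification one needs. First I would record the companion of \eqref{eq:KP-u-solved} under the interchange $i\leftrightarrow j$, namely
\begin{equation*}
u_{j,k(i)} = ( u_{i,k} - u_{j,k})^{-1} u_{j,k} ( u_{i,k+1} - u_{j,k+1}),
\end{equation*}
which follows from $(-a)^{-1}=-a^{-1}$ in a division ring. Subtracting this from \eqref{eq:KP-u-solved}, the common left factor $(u_{i,k}-u_{j,k})^{-1}$ meets the difference $u_{i,k}-u_{j,k}$ of the inner factors, so that
\begin{equation*}
u_{i,k(j)} - u_{j,k(i)} = u_{i,k+1} - u_{j,k+1},
\end{equation*}
which is precisely \eqref{eq:ncKP-2}; thus this step is in fact automatic.

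Next I would shift the index $k\to k-1$ in the last identity to obtain $u_{i,k-1(j)} - u_{j,k-1(i)} = u_{i,k} - u_{j,k}$, and substitute both relations into the right-hand side of \eqref{eq:KP-u-inverse}. This rewrites it as
\begin{equation*}
( u_{i,k} - u_{j,k})\, u_{i,k(j)}\, ( u_{i,k+1} - u_{j,k+1})^{-1}.
\end{equation*}
Finally, multiplying \eqref{eq:KP-u-solved} on the left by $(u_{i,k}-u_{j,k})$ gives the identity $(u_{i,k}-u_{j,k})\,u_{i,k(j)} = u_{i,k}\,(u_{i,k+1}-u_{j,k+1})$, whence the displayed expression collapses to $u_{i,k}$, as claimed. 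The invertibility of the factors entering these formulas is automatic in the generic setting of the Mal'cev--Neumann division ring.

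This computation shows that \eqref{eq:KP-u-inverse}, together with its $i\leftrightarrow j$ counterpart recovering $u_{j,k}$, is a left inverse of the KP map. Since both \eqref{eq:KP-u-solved} and \eqref{eq:KP-u-inverse} are rational expressions in the non-commuting entries, birationality follows once the two maps are known to be mutually inverse; the right-inverse direction I would obtain by running the identical computation with the roles of the initial edges $(\boldsymbol{u}_i,\boldsymbol{u}_j)$ and the shifted edges $(\boldsymbol{u}_{i(j)},\boldsymbol{u}_{j(i)})$ exchanged, the hierarchy equations \eqref{eq:ncKP-1}--\eqref{eq:ncKP-2} being symmetric under that reflection. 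The only point demanding genuine care is the non-commutative bookkeeping --- keeping track of left versus right factors so that nothing is cancelled on the wrong side --- rather than any deep structural obstacle; indeed the sole nontrivial ingredient, the difference identity above, is nothing but equation \eqref{eq:ncKP-2} itself.
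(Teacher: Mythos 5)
Your verification is correct: the difference identity $u_{i,k(j)}-u_{j,k(i)}=u_{i,k+1}-u_{j,k+1}$ (equation \eqref{eq:ncKP-2}), its shift $k\to k-1$, and the rearrangement $(u_{i,k}-u_{j,k})\,u_{i,k(j)}=u_{i,k}\,(u_{i,k+1}-u_{j,k+1})$ of \eqref{eq:KP-u-solved} do collapse the right-hand side of \eqref{eq:KP-u-inverse} to $u_{i,k}$, and the mirrored computation for the right inverse goes through as you indicate. This is exactly the direct substitution the paper leaves to the reader (no explicit proof of the Corollary is given there), so your argument matches the intended one.
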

By direct calculations, along lines described in \cite{Dol-Des-red} for the KP map, one can show the following result.
\begin{Cor}
The inverse of the KP map is three-dimensionally consistent as well, i.e. given values $(\boldsymbol{u}_{i(jl)}, \boldsymbol{u}_{j(il)},\boldsymbol{u}_{l(ij)})$ on three final edges of the cube $Q_3$, the various ways to obtain values $(\boldsymbol{u}_{i}, \boldsymbol{u}_{j},\boldsymbol{u}_{l})$ on the initial edges give the same result.
\end{Cor}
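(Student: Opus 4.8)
The plan is to follow the same direct strategy used in \cite{Dol-Des-red} for the forward map, while keeping in mind the following conceptual picture. Since the forward KP map is multidimensionally consistent (the Proposition above) and birational (Corollary~\ref{cor:KP-u-inverse}), the six quadrilateral faces of the cube $Q_3$ carry one and the same invertible relation, and a consistent labelling of all twelve edges is freely parametrised by the three initial edges. What has to be shown is that such a labelling is equally well parametrised by the three \emph{final} edges, i.e.\ that the backward propagation defined by \eqref{eq:KP-u-inverse} is single-valued. I stress that this does not follow formally from consistency of the forward map together with invertibility alone: because the KP map is not involutive, the antipodal symmetry of $Q_3$ interchanges the relation \eqref{eq:KP-u-solved} with the relation \eqref{eq:KP-u-inverse} on the opposite face, and therefore does not by itself transport forward consistency into backward consistency. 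The genuine content lies in the explicit form of the hierarchy.

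First I would fix directions $i,j,l$ and record the six face relations of $Q_3$, writing each one both in the forward form \eqref{eq:KP-u-solved} and in the solved-backward form \eqref{eq:KP-u-inverse}. Given the three final edges $(\boldsymbol{u}_{i(jl)},\boldsymbol{u}_{j(il)},\boldsymbol{u}_{l(ij)})$, I would apply \eqref{eq:KP-u-inverse} on the three faces through the terminal vertex $(1,1,1)$ to produce the six intermediate edges $\boldsymbol{u}_{i(j)},\boldsymbol{u}_{i(l)},\boldsymbol{u}_{j(i)},\boldsymbol{u}_{j(l)},\boldsymbol{u}_{l(i)},\boldsymbol{u}_{l(j)}$, each of them determined exactly once. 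The assertion then reduces to three equalities, one per initial edge: for instance the two determinations of $\boldsymbol{u}_l$, coming from the face $n_i=0$ and from the face $n_j=0$, must agree, and similarly for $\boldsymbol{u}_i$ and $\boldsymbol{u}_j$. Componentwise in the level index $k$, each such equality is a rational expression in the division ring, and I would verify it by repeated use of the hierarchy equations \eqref{eq:ncKP-1} and \eqref{eq:ncKP-2}; the calculation is the backward counterpart of the one establishing the Proposition.

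A more structural alternative, which I would pursue in parallel as a cross-check, is to exhibit an involution $\tau$ of the non-Abelian KP hierarchy that conjugates the forward map into its inverse: reversal of all lattice orientations $n_a\mapsto -n_a$, reversal of the level index $k\mapsto -k$, and the product-reversing anti-automorphism of the free skew field induced by word reversal in $\Gamma$. The point is that under product reversal the forward pattern $(\cdot)^{-1}(\cdot)(\cdot)$ in \eqref{eq:KP-u-solved} matches the backward pattern $(\cdot)(\cdot)(\cdot)^{-1}$ in \eqref{eq:KP-u-inverse}; if $\tau$ is shown to map solutions of \eqref{eq:ncKP-1}--\eqref{eq:ncKP-2} to solutions and to intertwine the KP map with its inverse, then three-dimensional consistency of the inverse is immediate from that of the forward map. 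The main obstacle in either route is the same, namely controlling the non-commutative algebra: in the direct route it is the bookkeeping of the intermediate edges and the order of factors in the resulting rational identities, while in the structural route it is checking that word reversal descends to a well-defined anti-automorphism of the Mal'cev--Neumann division ring and that it indeed carries \eqref{eq:KP-u-solved} to \eqref{eq:KP-u-inverse}.
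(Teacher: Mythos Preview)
Your primary route---propagating \eqref{eq:KP-u-inverse} from the three final edges through the intermediate edges and then verifying, componentwise in $k$, that the two candidate values of each initial edge coincide by repeated use of \eqref{eq:ncKP-1}--\eqref{eq:ncKP-2}---is exactly what the paper does: it offers no more than the one-line indication ``by direct calculations, along the lines described in \cite{Dol-Des-red} for the KP map''. So on the main approach you are aligned with the paper.

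Your second, structural route via an involution $\tau$ (lattice reversal, level reversal, and the word-reversal anti-automorphism) is not in the paper and is a genuinely different idea. If it can be made to work it is more conceptual: three-dimensional consistency of the inverse would be inherited from that of the forward map with no further computation. The delicate point you flag is the right one---one must check that the anti-automorphism, combined with $k\mapsto -k$ and $n_a\mapsto -n_a$, really carries \eqref{eq:KP-u-solved} to \eqref{eq:KP-u-inverse} (note the index shift $k-1$ in the first factor of \eqref{eq:KP-u-inverse}) and preserves the system \eqref{eq:ncKP-1}--\eqref{eq:ncKP-2}. Your cautionary remark that backward consistency is not a purely formal consequence of forward consistency plus birationality (since the map is not involutive and the antipodal symmetry swaps the two face relations) is a useful clarification that the paper leaves implicit.
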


\section{The companion of the KP map and its properties} \label{sec:KP-comp-map}
Define the companion map to the KP map
$r \colon (\boldsymbol{x}, \boldsymbol{y}) \mapsto 
(\boldsymbol{x}^\prime, \boldsymbol{y}^\prime)$
by solution of the equations
\begin{equation} \label{eq:xy-k}
x_k y_k = x_k^\prime y_k^\prime, \qquad y_k + x_{k+1} = y_k^\prime + x_{k+1}^\prime, \qquad k\in\ZZ,
\end{equation}
which is just system \eqref{eq:ncKP-1}-\eqref{eq:ncKP-2} written in new variables 
\begin{equation}
x_k = u_{i,k}, \quad y_k = u_{j,k(i)}, \quad x^\prime_k = u_{j,k}, \quad y^\prime_k = u_{i,k(j)}
\end{equation} 
for fixed pair of distinct indices $i,j$. We assume that the arguments $(\boldsymbol{x}, \boldsymbol{y}) = (x_k,y_k)_{k\in\ZZ}$ of the map  are free variables.

Notice that equation \eqref{eq:xy-k} does not determine the map uniquely. Given one component of the solution, $x_0^\prime$ say, all other components can be obtained then by the two-fold recursion
\begin{align}
y_k^\prime  = (x_k^\prime)^{-1} x_k y_k, \qquad  & x_{k+1}^\prime = y_k + x_{k+1} - y_k^\prime ,  & k\geq 0, \\
y_{k-1}^\prime = y_{k-1} + x_k - x_k^\prime ,  \qquad & x_{k-1}^\prime = x_{k-1}y_{k-1}(y_{k-1}^\prime)^{-1} ,  & k \leq 0 \, .
\end{align}
One can consider then $x_0^\prime$ as free parameter (notice that the assumption $x_0^\prime = x_0$ leads to the identity mapping). 
When we are interested in $K$-periodic solutions to the companion map equations~\eqref{eq:xy-k}, i.e. $k\in\ZZ_K$, then the freedom vanishes. 
In this Section we study properties of the companion map (with free parameter)  which hold even without the periodicity restriction.

	Define recursively polynomials
	\begin{equation} \label{eq:P-k-l-ind}
	\mathcal{P}_{k}^{(0)} = 1, \qquad
	\mathcal{P}_{k}^{(\ell)} = \mathcal{P}_{k}^{(\ell - 1)} \, x_{k+\ell} + 
	y_k \hdots y_{k+\ell-1}  = 
	x_{k+1} \hdots x_{k+\ell}+ y_{k} \mathcal{P}_{k+1}^{(\ell -1)} .
	\end{equation}
	In \cite{Doliwa-YB} it was shown that the polynomials are invariants of the companion map. Following \cite{NY-RSK} we can represent the polynomials as formal sums of all weighted paths between two vertices of the graph as on Figure~\ref{fig:Poly}.
This observation can be also given meaning in terms of matrices.
\begin{Prop}
	The polynomial $\mathcal{P}_{k}^{(\ell)}$ is $(2,1)$ element of the following product of $2\times 2$ matrices
	\begin{equation}
	\left( \begin{array}{cc} x_k & 0 \\ 1 & y_k \end{array} \right)
	\left( \begin{array}{cc} x_{k+1} & 0 \\ 1 & y_{k+1} \end{array} \right) \hdots
	 \left( \begin{array}{cc} x_{k+\ell} & 0 \\ 1 & y_{k+\ell} \end{array} \right) =
	 \left( \begin{array}{cc} x_k \hdots x_{k+\ell} & 0 \\ 
	 \mathcal{P}_{k}^{(\ell)} & y_k \hdots y_{k+\ell} \end{array} \right) 
	\end{equation}
\end{Prop}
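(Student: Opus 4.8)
The plan is to argue by induction on $\ell$, building the matrix product one factor at a time from the right and reading off the entries at each stage. For the base case $\ell = 0$ the product consists of the single matrix $\left(\begin{smallmatrix} x_k & 0 \\ 1 & y_k \end{smallmatrix}\right)$, whose $(2,1)$ entry is $1 = \cP_k^{(0)}$, in agreement with the first line of~\eqref{eq:P-k-l-ind}; the diagonal entries $x_k$ and $y_k$ also match the claimed forms $x_k\hdots x_{k+\ell}$ and $y_k\hdots y_{k+\ell}$ specialised to $\ell = 0$.

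For the inductive step I would assume the displayed identity holds for $\ell - 1$, that is, the product of the first $\ell$ matrices equals $\left(\begin{smallmatrix} x_k\hdots x_{k+\ell-1} & 0 \\ \cP_k^{(\ell-1)} & y_k\hdots y_{k+\ell-1} \end{smallmatrix}\right)$, and then multiply this on the right by the next factor $\left(\begin{smallmatrix} x_{k+\ell} & 0 \\ 1 & y_{k+\ell} \end{smallmatrix}\right)$. Carrying out the $2\times 2$ multiplication, the upper--right entry stays $0$, the two diagonal entries become $x_k\hdots x_{k+\ell}$ and $y_k\hdots y_{k+\ell}$, and the $(2,1)$ entry becomes $\cP_k^{(\ell-1)}\, x_{k+\ell} + y_k\hdots y_{k+\ell-1}$, which is precisely the first of the two recursions defining $\cP_k^{(\ell)}$ in~\eqref{eq:P-k-l-ind}. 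This closes the induction.

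The only point requiring care, since the $x_j$ and $y_j$ do not commute, is the left--right placement of factors inside each entry: the ordered matrix product automatically produces $\cP_k^{(\ell-1)}$ to the left of $x_{k+\ell}$ and the $y_j$ in increasing order of index, so the output coincides with the recursion exactly as written rather than with its opposite. I expect this bookkeeping to be the only subtlety, the computation itself being routine. One could equally run the induction by peeling off the leftmost factor $\left(\begin{smallmatrix} x_k & 0 \\ 1 & y_k \end{smallmatrix}\right)$ instead, in which case the emerging $(2,1)$ entry reproduces the second recursion $x_{k+1}\hdots x_{k+\ell} + y_k\, \cP_{k+1}^{(\ell-1)}$; this gives an alternative proof and at the same time explains the equality of the two expressions for $\cP_k^{(\ell)}$ in~\eqref{eq:P-k-l-ind}.
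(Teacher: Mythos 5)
Your proof is correct and is essentially the paper's own argument: the paper also proves the identity by induction on $\ell$ using the recursive definition \eqref{eq:P-k-l-ind}, and your right-multiplication step matching the first recursion (with the left-peeling variant matching the second) is exactly the intended computation.
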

\begin{proof}
	By mathematical induction using the recursive definition \eqref{eq:P-k-l-ind}. The combinatorial interpretation of weighted paths in graphs in terms of matrices is standard~\cite{Zeilberger,Stanley-2}.
\end{proof}
\begin{figure}
	\begin{center}
		\includegraphics[width=8cm]{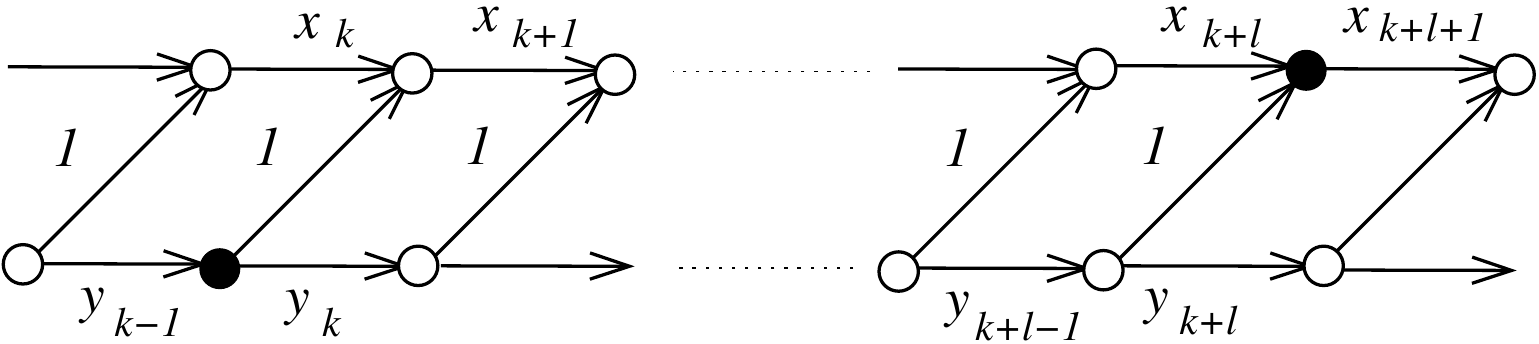}
	\end{center}
	\caption{The combinatorial interpretation of the polynomial $ \mathcal{P}_{k}^{(\ell)}$ as the sum over all weighted paths between two vertices (marked in black)} 
	\label{fig:Poly}
\end{figure}
\begin{Cor} \label{cor-alpha-beta}
Define functions $\alpha_k(\boldsymbol{x},\boldsymbol{y}), \beta_k(\boldsymbol{x},\boldsymbol{y})$, $k\in\mathbb{Z}$, by 
\begin{align} \label{eq:xy-alpha}
 \alpha_k(\boldsymbol{x},\boldsymbol{y})  & = x_k y_k    ,\\
  \beta_k(\boldsymbol{x},\boldsymbol{y}) & =  y_{k} + x_ {k+1}, \label{eq:xy-beta}
\end{align}
or using infinite matrices
\begin{equation*}
\left(  \begin{array}{ccccc}  
\ddots & \ddots  & \ddots & \ddots &  \\
\ddots    & x_{k-1} & 0 & 0 & \ddots \\
\ddots  &	1 & x_{k}  & 0 &  \ddots  \\
\ddots & 0 & 1 & x_{k+1} &  \ddots   \\
& \ddots & \ddots & \ddots & \ddots 
\end{array} \right) 
\left(  \begin{array}{ccccc}  
\ddots & \ddots  & \ddots & \ddots &  \\
\ddots    & y_{k-1} & 0 & 0 & \ddots \\
\ddots  &	1 & y_{k}  & 0 &  \ddots  \\
\ddots & 0 & 1 & y_{k+1} &  \ddots   \\
& \ddots & \ddots & \ddots & \ddots 
\end{array} \right) =
\left(  \begin{array}{ccccc}  
\ddots & \ddots  & \ddots & \ddots &  \\
\ddots    & \alpha_{k-1} & 0 & 0 & \ddots \\
\ddots  &	\beta_{k-1} & \alpha_{k}  & 0 &  \ddots  \\
\ddots & 1 & \beta_{k} & \alpha_{k+1} &  \ddots   \\
& \ddots & \ddots & \ddots & \ddots 
\end{array} \right) .
\end{equation*}	
Then the problem of finding the companion to the KP map \eqref{eq:xy-k} can be formulated as the problem of finding another solution $x_k^\prime, y_k^\prime$ of the system \eqref{eq:xy-alpha}-\eqref{eq:xy-beta} with the data $\alpha_k, \beta_k$, once a solution $x_k, y_k$ has been given or, equivalently, as a matrix refactorization problem.
\end{Cor}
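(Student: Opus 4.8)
The plan is to establish the two assertions of the corollary separately: first the infinite-matrix identity displayed in the statement, and then the reinterpretation of the companion map equations \eqref{eq:xy-k} as a refactorization problem for that identity. Both parts are essentially bookkeeping, and the only genuine subtlety is to keep track of the non-commutative order of the symbols throughout.

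For the matrix identity, I would write $A(\boldsymbol{x})$ and $B(\boldsymbol{y})$ for the two lower-bidiagonal infinite matrices on the left-hand side, so that $A(\boldsymbol{x})_{k,k}=x_k$, $A(\boldsymbol{x})_{k,k-1}=1$ and likewise for $B(\boldsymbol{y})$, with all remaining entries zero. Since each factor has only two non-zero diagonals, the product is banded, its non-zero entries being confined to the main diagonal and the two diagonals immediately below it; in particular every entry $\sum_m A(\boldsymbol{x})_{k,m}B(\boldsymbol{y})_{m,j}$ of the product is a \emph{finite} sum, so the product is well defined over the division ring and no convergence question arises. Evaluating these finite sums gives, on the main diagonal, $A_{k,k}B_{k,k}=x_k y_k=\alpha_k$, where the order is essential ($x_k$ enters on the left from $A$, and $y_k$ on the right from $B$); on the first subdiagonal, $A_{k,k}B_{k,k-1}+A_{k,k-1}B_{k-1,k-1}=x_k+y_{k-1}=\beta_{k-1}$; and on the second subdiagonal, $A_{k,k-1}B_{k-1,k-2}=1$. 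Every other entry vanishes, which is exactly the right-hand matrix. This is the same path-counting mechanism used in the preceding Proposition, now in its banded infinite-matrix form.

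For the reformulation, I would simply read off that the defining equations \eqref{eq:xy-k}, namely $x_k y_k=x_k^\prime y_k^\prime$ and $y_k+x_{k+1}=y_k^\prime+x_{k+1}^\prime$, are precisely the statements $\alpha_k(\boldsymbol{x},\boldsymbol{y})=\alpha_k(\boldsymbol{x}^\prime,\boldsymbol{y}^\prime)$ and $\beta_k(\boldsymbol{x},\boldsymbol{y})=\beta_k(\boldsymbol{x}^\prime,\boldsymbol{y}^\prime)$ for all $k\in\ZZ$. Combined with the matrix identity just verified, this says that the companion map is the passage from the factorization $A(\boldsymbol{x})B(\boldsymbol{y})$ to a second factorization $A(\boldsymbol{x}^\prime)B(\boldsymbol{y}^\prime)$ of one and the same banded matrix into a product of two lower-bidiagonal matrices of the prescribed shape, i.e.\ a matrix refactorization problem, which is the assertion.

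A remark on where the real difficulty sits: the identity and its reinterpretation are routine, and the only thing to watch is the non-commutative ordering in the diagonal entry $\alpha_k=x_k y_k$. What this corollary does \emph{not} settle, and what the following sections address, is the existence and (non-)uniqueness of the second factorization. As already observed after \eqref{eq:xy-k}, one component such as $x_0^\prime$ stays free and the remaining entries are then fixed by the two-sided recursion, so in the generic (non-periodic) case the refactorization admits a one-parameter family of solutions. Thus the substantive work is the explicit solution of the refactorization, not the reformulation carried out here.
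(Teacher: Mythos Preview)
Your proposal is correct. The paper gives no proof of this corollary at all, treating it as an immediate reformulation; your direct entry-by-entry verification of the banded matrix product and the identification of \eqref{eq:xy-k} with $\alpha_k(\boldsymbol{x},\boldsymbol{y})=\alpha_k(\boldsymbol{x}^\prime,\boldsymbol{y}^\prime)$, $\beta_k(\boldsymbol{x},\boldsymbol{y})=\beta_k(\boldsymbol{x}^\prime,\boldsymbol{y}^\prime)$ is exactly the natural way to make that implicit step explicit.
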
 
\begin{Rem}
	It is immediate to see that the functions $\alpha_k(\boldsymbol{x},\boldsymbol{y}), \beta_k(\boldsymbol{x},\boldsymbol{y})$ are invariant with respect to the companion map. This holds true independently of the value of the free parameter which specifies the map. 
	\end{Rem}
\begin{Rem}
	Notice that  $\mathcal{P}_{k}^{(1)} = \beta_k$, and the recurrence formula \eqref{eq:P-k-l-ind} gives
	\begin{equation}
	\mathcal{P}_{k}^{(\ell)} = \mathcal{P}_{k}^{(\ell - 1)}  \beta_{k+\ell - 1} -\mathcal{P}_{k}^{(\ell-2)} \alpha_{k+\ell - 1},
	\end{equation}
	which easily implies invariance of the polynomials $\mathcal{P}_{k}^{(\ell)}$ with respect to the companion map.
\end{Rem}

Let us close this Section with similar discussion of braid relations \eqref{eq:braid-r} in the context of the companion map. Given three sets of indeterminates 
$(\boldsymbol{x},\boldsymbol{y},\boldsymbol{z} )= (x_k,y_k,z_k)_{k\in\mathbb{Z}}$, define (using the same symbols as above but with different number of arguments, the meaning will be clear from the context) functions 
$\alpha_k(\boldsymbol{x},\boldsymbol{y},\boldsymbol{z} )$, $\beta_k(\boldsymbol{x},\boldsymbol{y},\boldsymbol{z} )$, and $\gamma_k(\boldsymbol{x},\boldsymbol{y},\boldsymbol{z} )$, $k\in \mathbb{Z}$, by
\begin{align}
\alpha_k(\boldsymbol{x},\boldsymbol{y},\boldsymbol{z} ) & = x_k y_k z_k , \\
 \beta_k(\boldsymbol{x},\boldsymbol{y},\boldsymbol{z} )  & = y_k z_k + x_{k+1} z_k + x_{k+1} y_{k+1},\\
\gamma_k(\boldsymbol{x},\boldsymbol{y},\boldsymbol{z} ) & = z_k + y_{k+1} + x_ {k+2} .
\end{align}
\begin{Prop}
The above three-argument functions 
$(\alpha_k, \beta_k, \gamma_k)_{k\in \mathbb{Z}} $ are invariant with respect to the maps $r_1 = r \times \mathrm{id}$ and $r_2 = \mathrm{id} \times r$ whose superpositions give the braid relations \eqref{eq:braid-r}.
\end{Prop}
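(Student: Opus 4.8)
The plan is to reduce the claim to the matrix refactorization picture already set up in Corollary~\ref{cor-alpha-beta}, extended from two factors to three. Write $A(\boldsymbol{x})$ for the infinite lower bidiagonal matrix carrying $x_k$ on the diagonal and $1$ on the first subdiagonal, so that Corollary~\ref{cor-alpha-beta} reads: the product $A(\boldsymbol{x})A(\boldsymbol{y})$ is the banded lower-triangular matrix whose diagonal, first and second subdiagonals are $\alpha_k(\boldsymbol{x},\boldsymbol{y})$, $\beta_k(\boldsymbol{x},\boldsymbol{y})$ and the constant $1$. First I would observe that the defining equations \eqref{eq:xy-k} of the companion map say precisely that $r$ preserves the two sequences $\alpha_k$ and $\beta_k$, and since these (together with the constant second subdiagonal) are exactly the entries of $A(\boldsymbol{x})A(\boldsymbol{y})$, the map $r$ amounts to the refactorization
\begin{equation*}
A(\boldsymbol{x})\,A(\boldsymbol{y}) = A(\boldsymbol{x}^\prime)\,A(\boldsymbol{y}^\prime),
\end{equation*}
valid for every choice of the free parameter.

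Next I would compute the triple product $A(\boldsymbol{x})A(\boldsymbol{y})A(\boldsymbol{z})$. Multiplying the two-factor result by $A(\boldsymbol{z})$ on the right, a short bookkeeping of the nonzero diagonals shows that it is again lower-triangular and banded, with diagonal equal to $\alpha_k(\boldsymbol{x},\boldsymbol{y},\boldsymbol{z})$, first subdiagonal equal to $\beta_k(\boldsymbol{x},\boldsymbol{y},\boldsymbol{z})$, second subdiagonal equal to $\gamma_k(\boldsymbol{x},\boldsymbol{y},\boldsymbol{z})$, and third subdiagonal equal to the constant $1$, all lower diagonals vanishing. This is the three-argument analogue of the two-factor matrix identity in Corollary~\ref{cor-alpha-beta}, and it identifies $\alpha_k,\beta_k,\gamma_k$ as the complete list of nontrivial entries of $A(\boldsymbol{x})A(\boldsymbol{y})A(\boldsymbol{z})$.

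With these two ingredients the conclusion is immediate. The map $r_1 = r\times\mathrm{id}$ replaces the first two factors by $A(\boldsymbol{x}^\prime)A(\boldsymbol{y}^\prime)$ while fixing $A(\boldsymbol{z})$, hence sends $A(\boldsymbol{x})A(\boldsymbol{y})A(\boldsymbol{z})$ to $A(\boldsymbol{x}^\prime)A(\boldsymbol{y}^\prime)A(\boldsymbol{z})$, which equals the original product by the refactorization identity above; likewise $r_2 = \mathrm{id}\times r$ refactors the last two factors as $A(\boldsymbol{y}^\prime)A(\boldsymbol{z}^\prime)$ while fixing $A(\boldsymbol{x})$, and again leaves the triple product unchanged. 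Since $\alpha_k,\beta_k,\gamma_k$ are exactly the entries of that product, they are invariant under both $r_1$ and $r_2$, which is the assertion.

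I do not expect a genuine obstacle: the whole content is the associativity of the (banded, infinite) matrix product together with the refactorization reading of $r$, and the only real work is the index bookkeeping in the triple product of the second paragraph, which merely extends the already proven two-factor identity. The one point deserving a line of care is that the entries of the matrices lie in a non-commutative ring, so all products must be kept in the correct left-to-right order; since the band structure makes every entry of a product a finite sum and only associativity is used, non-commutativity causes no difficulty.
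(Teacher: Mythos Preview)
Your argument is correct and is essentially the paper's own proof recast in matrix language: the paper writes out the three identities
\[
\alpha_k(\boldsymbol{x},\boldsymbol{y},\boldsymbol{z})=\alpha_k(\boldsymbol{x},\boldsymbol{y})\,z_k=x_k\,\alpha_k(\boldsymbol{y},\boldsymbol{z}),\quad
\beta_k(\boldsymbol{x},\boldsymbol{y},\boldsymbol{z})=\beta_k(\boldsymbol{x},\boldsymbol{y})z_k+\alpha_{k+1}(\boldsymbol{x},\boldsymbol{y})=\alpha_k(\boldsymbol{y},\boldsymbol{z})+x_{k+1}\beta_k(\boldsymbol{y},\boldsymbol{z}),
\]
\[
\gamma_k(\boldsymbol{x},\boldsymbol{y},\boldsymbol{z})=z_k+\beta_{k+1}(\boldsymbol{x},\boldsymbol{y})=\beta_k(\boldsymbol{y},\boldsymbol{z})+x_{k+2},
\]
which are exactly the entrywise content of the associativity $[A(\boldsymbol{x})A(\boldsymbol{y})]A(\boldsymbol{z})=A(\boldsymbol{x})[A(\boldsymbol{y})A(\boldsymbol{z})]$ you invoke, and then appeals to the invariance of the two-argument functions under $r$. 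Your framing has the mild advantage of making the underlying reason (associativity plus the refactorization reading of $r$) explicit rather than leaving it implicit in the checked identities.
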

\begin{proof}
	The result follows from the observation that 
\begin{align*}
\alpha_k(\boldsymbol{x},\boldsymbol{y},\boldsymbol{z} ) & = \alpha_k(\boldsymbol{x},\boldsymbol{y}) \, z_k = 
x_k \, \alpha_k(\boldsymbol{y},\boldsymbol{z} ), \\
\beta_k(\boldsymbol{x},\boldsymbol{y},\boldsymbol{z} ) & = \beta_k( \boldsymbol{x},\boldsymbol{y}) z_k + \alpha_{k+1} (\boldsymbol{x},\boldsymbol{y}) =
\alpha_{k} (\boldsymbol{y},\boldsymbol{z}) + x_{k+1} \beta_k(\boldsymbol{y},\boldsymbol{z} )
,\\
\gamma_k(\boldsymbol{x},\boldsymbol{y},\boldsymbol{z} )  & = z_k + \beta_{k+1}(\boldsymbol{x},\boldsymbol{y} ) = \beta_k(\boldsymbol{y},\boldsymbol{z} ) + x_{k+2},
\end{align*}
and the invariance of the two-argument functions with respect to the companion map.  	
\end{proof}

\section{The companion map in the periodic reduction} \label{sec:KP-series} 
\subsection{Explicit formulas for periodic companion map}
In this Section we assume that the arguments $(x_k,y_k)$ of the companion map are periodic with period $K$, i.e. $x_{k+K} = x_k$, 
$y_{k+K} = y_k$. We will be looking for the explicit form of the map which preserves the periodicity condition, i.e. $x_{k+K}^\prime = x_k^\prime$, $y_{k+K}^\prime = y_k^\prime$ as well.  
Ruling out the identity mapping we define $h_k$, as it was done in the commutative case in \cite{NY-RSK}, by 
\begin{equation} \label{eq:x-h}
x_{k+1}^\prime = x_{k+1} + h_k^{-1}.
\end{equation}
Then the second part of the system \eqref{eq:xy-k} gives
\begin{equation} \label{eq:y-h}
y_{k}^\prime + h_k^{-1} = y_{k} ,
\end{equation}
and the first part leads to equations
\begin{equation} \label{eq:hxy-rec}
h_k x_{k+1} + 1 = y_{k+1}h_{k+1} .
\end{equation}

In  order to make subsequent formulas more readable, let us introduce the following notation
\begin{equation*}
[y^{-1}x]_k^0 = 1, \qquad [y^{-1}x]_k^{n+1} = y_{k-1}^{-1}[y^{-1}x]_{k-1}^n x_k \qquad \text{where} \quad n\geq 0.
\end{equation*}
For example
\begin{equation*}
[y^{-1}x]_2^4 = y_1^{-1}y_0^{-1}y_{-1}^{-1} y_{-2}^{-1} x_{-1} x_0 x_1 x_2 
\end{equation*}
which in $K=2$-periodic case should mean
\begin{equation*}
[y^{-1}x]_2^4 = y_1^{-1}y_2^{-1}y_{1}^{-1} y_{2}^{-1}  x_{1} x_2 x_1 x_2,
\end{equation*}
\begin{Prop}
	The formal series solution of the system \eqref{eq:hxy-rec} is given by
	\begin{equation} \label{eq:hk-1}
	h_k = y_k^{-1} \sum_{n\geq 0} [y^{-1}x]_k^n = 
	\sum_{n\geq 0} (y_{k-n} y_{k-n+1} \dots y_k)^{-1} x_{k-n+1} x_{k-n+2} \dots x_k ,
	\end{equation}
	i.e. we regard $h_k$ as Mal'cev--Neumann series in the symbols $y_k^{-1}$, $x_k$ for $k\in\mathbb{Z}_K$.
\end{Prop}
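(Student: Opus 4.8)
The plan is to verify that the proposed series \eqref{eq:hk-1} indeed solves the recurrence \eqref{eq:hxy-rec}, namely $h_k x_{k+1} + 1 = y_{k+1} h_{k+1}$, and then argue that it defines a legitimate element of the Mal'cev--Neumann division ring. First I would rewrite the candidate compactly as $h_k = \sum_{n\geq 0} c_{k,n}$ where $c_{k,n} = (y_{k-n} y_{k-n+1} \dots y_k)^{-1} x_{k-n+1} x_{k-n+2} \dots x_k$, with the convention $c_{k,0} = y_k^{-1}$. The recursive notation gives the clean relations $c_{k,n} = y_k^{-1}\,[y^{-1}x]_k^n$ together with $[y^{-1}x]_k^{n+1} = y_{k-1}^{-1}[y^{-1}x]_{k-1}^n x_k$, which are exactly the building blocks appearing in \eqref{eq:hxy-rec}.

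The core is a direct substitution. Multiplying $h_k$ on the right by $x_{k+1}$ and using the shift structure of the monomials, each term $c_{k,n} x_{k+1}$ has the form $(y_{k-n}\dots y_k)^{-1} x_{k-n+1}\dots x_k x_{k+1}$. On the other side, multiplying $h_{k+1}$ on the left by $y_{k+1}$ cancels the leading $y_{k+1}^{-1}$ factor in each of its terms, producing $y_{k+1} c_{k+1,n} = (y_{k+1-n}\dots y_k)^{-1} x_{k+1-n+1}\dots x_{k+1}$. Matching degrees, I would check that $y_{k+1} c_{k+1,n+1} = c_{k,n} x_{k+1}$ for every $n\geq 0$, while the single leftover term $y_{k+1} c_{k+1,0} = y_{k+1} y_{k+1}^{-1} = 1$ accounts precisely for the additive $+1$ in \eqref{eq:hxy-rec}. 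Thus $y_{k+1} h_{k+1} = 1 + \sum_{n\geq 0} c_{k,n} x_{k+1} = 1 + h_k x_{k+1}$, as required. This telescoping between the two sides is the heart of the argument and is essentially a bookkeeping identity among monomials in the free group $\Gamma$ generated by the $x_k, y_k$.

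The step I expect to be the main obstacle is not the formal algebra but establishing that the right-hand side of \eqref{eq:hk-1} is a well-defined Mal'cev--Neumann series, i.e.\ that its support is well ordered with respect to the chosen group order $<$. Here the periodicity $k\in\mathbb{Z}_K$ matters: the monomials $(y_{k-n}\dots y_k)^{-1} x_{k-n+1}\dots x_k$ involve products of $n$ inverse generators followed by $n$ generators, and I would need to show, via the Magnus embedding $\mu$ described in Section~\ref{sec:prelim}, that these group elements are strictly increasing (or at least form a well-ordered set) as $n\to\infty$. Intuitively, since $\mu(y^{-1})$ starts with $1$ and $\mu(x)$ starts with $1$, the leading contributions push the elements to be comparable to higher powers, and the $n$-fold products accumulate degree in the $X_i$ variables in a controlled way, forcing the support to have no infinite descending chain. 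I would make this precise by tracking the lowest-degree term of $\mu$ applied to each monomial and invoking completeness of $\Bbbk((\Gamma,<))$ in the ultrametric topology to guarantee convergence of the sum.

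Finally I would remark that uniqueness of the solution is not claimed at this formal level, since \eqref{eq:hk-1} corresponds to one natural choice of summation direction (series bounded below in the sense of Example~\ref{ex:Coxeter-QN}); the opposite order would yield the companion series built from the reciprocal expansion, matching the two-fold recursion freedom in the free (non-periodic) case discussed earlier in the Section.
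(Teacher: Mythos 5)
Your verification is correct, and the telescoping identity $y_{k+1}c_{k+1,n+1}=c_{k,n}x_{k+1}$ together with $y_{k+1}c_{k+1,0}=1$ is exactly the algebra underlying the paper's argument; the difference is only in how it is packaged. The paper substitutes $Z_k=y_kh_k$ to turn \eqref{eq:hxy-rec} into $Z_k=1+y_{k-1}^{-1}Z_{k-1}x_k$ and then produces the series as the limit of successive approximations $Z_k^{(j+1)}=1+y_{k-1}^{-1}Z_{k-1}^{(j)}x_k$, which \emph{derives} the solution rather than merely checking it and, via the standard fixed-point argument in the complete ultrametric topology, identifies it as the unique solution supported on words in the letters $y_k^{-1},x_k$ --- a point your direct substitution does not by itself address (you rightly flag this at the end). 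Conversely, your verification is more elementary and makes the term-by-term matching explicit. On the well-ordering issue: your instinct is right that this is where the Mal'cev--Neumann framework does real work, but the clean way to settle it is not to track lowest-degree terms of $\mu$ for the specific monomials; rather, one chooses the order on the free group so that $y_k^{-1}$ and $x_k$ are the positive generators (i.e.\ $\mu(y_k^{-1})=1+Y_k$, $\mu(x_k)=1+X_k$), whence every element of the support is a product of finitely many group elements all $>1$, and the subsemigroup generated by a finite set of elements $>1$ is automatically well ordered. This is precisely what the proposition's closing phrase ``we regard $h_k$ as Mal'cev--Neumann series in the symbols $y_k^{-1}$, $x_k$'' encodes; note that with the Magnus order as literally set up in Section~\ref{sec:prelim} one would have $y_k^{-1}<1$, so the choice of order is not innocent and your sketch as written would need this adjustment.
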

\begin{proof}
	It is convenient to introduce $Z_k = y_k h_k$, and write down the above system in the form
	\begin{equation} \label{eq:Z_k} 
	Z_k = 1 + y_{k-1}^{-1} Z_{k-1} x_k.
	\end{equation}
	The above system can be solved by the successive approximation method \cite{Stanley-2,Salomaa} starting from $Z_k^{(0)}=0$, and proceeding recursively by 
	\begin{equation*}
	Z_k^{(j+1)} = 1 + y_{k-1}^{-1} Z_{k-1}^{(j)} x_{k}, \qquad j\geq 0.
	\end{equation*}
	The series 
	$$\lim_{j\to\infty} Z_k^{(j)} = \sum_{n\geq 0} [y^{-1}x]_k^n $$ 
	is the  fixed point of equation \eqref{eq:Z_k}.
\end{proof}

Our next step is finding the series representing inverse of the series \eqref{eq:hk-1}. We introduce the following standard notation~\cite{Introduction-QSym}. 
By a composition $\alpha$ of $n$, denoted $\alpha \models  n$, we mean a finite sequence $\alpha=(\alpha_1, \alpha_2 , \dots , \alpha_m)$ of positive integers with $n = \alpha_1 + \alpha_2 + \dots + \alpha_m$. The number $m=|\alpha|$ is called the length of the composition. 
Given composition $\alpha \models n$, define 
\begin{equation*}
[y^{-1}x]^\alpha_k = [y^{-1}x]^{\alpha_1}_k \dots [y^{-1}x]^{\alpha_m}_k  \; .
\end{equation*}
\begin{Prop} \label{prop:h-1}
	The inverse of the series
	\begin{equation*}
	h_k= y_k^{-1} \sum_{n\geq 0} [y^{-1}x]_k^n = y_k^{-1} \left(1 + y_{k-1}^{-1}  x_k +
	y_{k-1}^{-1} y_{k-2}^{-1}  x_{k-1}  x_k + \dots \right)
	\end{equation*}
	is given by
	\begin{equation*}
	h_k^{-1} = \left( \sum_{n\geq 0} \sum_{\alpha \models n} (-1)^{|\alpha|} [y^{-1}x]_k^\alpha \right) y_k = \left( 1 - y_{k-1}^{-1}  x_k +  y_{k-1}^{-1}  x_k \, y_{k-1}^{-1}  x_k -
	y_{k-1}^{-1} y_{k-2}^{-1}  x_{k-1}  x_k + \dots \right) y_k.
	\end{equation*}
\end{Prop}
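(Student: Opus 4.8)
The plan is to reduce the statement to inverting the bracketed series. Writing $S_k = \sum_{n\geq 0}[y^{-1}x]_k^n$ so that $h_k = y_k^{-1}S_k$, and noting that the single generator $y_k$ is invertible in the division ring, we get $h_k^{-1} = S_k^{-1}y_k$; hence it suffices to prove that $S_k^{-1}$ equals the double sum $T_k := \sum_{n\geq 0}\sum_{\alpha\models n}(-1)^{|\alpha|}[y^{-1}x]_k^\alpha$. Matching this against the claimed formula then finishes the proof, and expanding the first few compositions reproduces the displayed low-order terms.

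First I would split off the constant term, setting $A_k = \sum_{j\geq 1}[y^{-1}x]_k^j$, so that $S_k = 1 + A_k$. The key combinatorial observation is that, because $[y^{-1}x]_k^\alpha = [y^{-1}x]_k^{\alpha_1}\cdots[y^{-1}x]_k^{\alpha_m}$ uses the same base index $k$ in every factor, grouping the compositions $\alpha$ according to their length $m=|\alpha|$ turns $T_k$ into a geometric series: the compositions of length $m$ are exactly the $m$-tuples $(\alpha_1,\dots,\alpha_m)$ of positive integers, and their total contribution is $(-1)^m A_k^m$. Thus $T_k = \sum_{m\geq 0}(-1)^m A_k^m$, and a telescoping computation $(1+A_k)\sum_{m\geq 0}(-A_k)^m = \sum_{m\geq 0}(-A_k)^m - \sum_{m\geq 0}(-A_k)^{m+1} = 1$, together with its right-hand analogue, identifies $T_k$ as the two-sided inverse $S_k^{-1}$. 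Substituting back gives $h_k^{-1} = S_k^{-1}y_k = T_k y_k$, which is the asserted formula. (Equivalently, one may peel off the first part $\alpha_1$ of each nonempty composition to obtain the fixed-point identity $T_k = 1 - A_k T_k$ directly, hence $S_k T_k = 1$, and peel off the last part for $T_k S_k = 1$; either route leads to the same bookkeeping.)

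The step that needs real care — and which I regard as the main obstacle — is justifying that these formal rearrangements are legitimate inside the Mal'cev--Neumann ring, i.e.\ that $A_k$ is topologically nilpotent so that the geometric series converges. For this I would check that every word $[y^{-1}x]_k^\alpha$ with $\alpha\models n$ is already reduced of length exactly $2n$: within a single block $[y^{-1}x]_k^j = y_{k-1}^{-1}\cdots y_{k-j}^{-1}\,x_{k-j+1}\cdots x_k$ no two adjacent letters cancel (consecutive $y^{-1}$'s, respectively consecutive $x$'s, carry distinct indices, and the central junction pairs a $y^{-1}$ with an $x$), while at the junction between successive blocks an $x_k$ meets a $y_{k-1}^{-1}$, again with no cancellation. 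This remains valid in the $K$-periodic setting, since the two families of generators never annihilate one another regardless of how the indices reduce modulo $K$. Consequently each term of $A_k^m$ is a word of length at least $2m$, so $A_k^m\to 0$ in the ultra-metric topology, and by the completeness of $\Bbbk((\Gamma,<))$ recalled in Section~\ref{sec:prelim} the series $\sum_m(-A_k)^m$ converges to a genuine Mal'cev--Neumann series. With convergence secured the termwise telescoping above is valid, and uniqueness of inverses in the division ring guarantees that $T_k$ is indeed $S_k^{-1}$.
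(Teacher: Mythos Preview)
Your proof is correct and is essentially the paper's argument in a slightly tidier wrapping: the paper sets $Z_k^{-1}=\sum_n c_n$, reads off the recursion $\sum_{i=0}^n c_i\,[y^{-1}x]_k^{n-i}=0$ from $Z_k^{-1}Z_k=1$, and verifies $c_n=\sum_{\alpha\models n}(-1)^{|\alpha|}[y^{-1}x]_k^\alpha$ by separating the last part of each composition --- precisely the alternative you describe in your parenthetical. Your reformulation $T_k=\sum_{m\geq 0}(-A_k)^m$ compresses the same bookkeeping into a geometric-series telescoping, and your explicit convergence check (reduced words, hence coefficient-wise stabilization) supplies a justification the paper leaves implicit.
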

\begin{proof} Let us look for the inverse of the series $Z_k=\sum_{n\geq 0} [y^{-1}x]_k^n$
	in the form $Z_k^{-1}= \sum_{n\geq 0} c_n$, where $c_n$ is homogeneous polynomial of degree $2n$ in the symbols $y_l^{-1}$ and $x_l$. Then equation $Z_k^{-1} Z_k = 1$ implies
	\begin{equation*}
	c_0 = 1, \qquad 
	c_0 [y^{-1}x]^n_k + c_1 [y^{-1}x]^{n-1}_k + \dots + c_n = 0 , \quad n>0 \;.
	\end{equation*}
	The identification
	\begin{equation*}
	c_n = \sum_{\alpha \models n} (-1)^{|\alpha|} [y^{-1}x]^\alpha_k
	\end{equation*}
	follows by separation of the last component $\alpha_m$ from composition $\alpha$
	\begin{equation*}
	\sum_{\alpha \models n} (-1)^{|\alpha|} [y^{-1}x]^\alpha_k = 
	- \sum_{i=1}^n \left( \sum_{\beta \models n-i} (-1)^{|\beta|} [y^{-1}x]^\beta_k \right)	
	[y^{-1}x]^i_k \; .
	\end{equation*}
\end{proof}
\begin{Rem} 
	There is a natural bijection between compositions of $n$ and subsets of $\{1,2,\dots, n-1\}$, see \cite{Introduction-QSym},
	\begin{equation*}
	\alpha = (\alpha_1, \alpha_2 , \dots , \alpha_m) \mapsto 
	\{ \alpha_1, \alpha_1+\alpha_2, \dots, \alpha_1 + \alpha_2 + \dots + \alpha_{n-1} \} ,
	\end{equation*}
	which gives the number $2^{n-1}$ of compositions of $n$.
	We conclude that each homogenous term $c_n$ has $2^{n-1}$ summands, $n>0$.
\end{Rem}
\begin{Rem}
	By separating the first components of compositions we can check directly that the left inverse of $Z_k$ equals its right inverse.
\end{Rem}
Finally we can state the explicit transition formulas.
\begin{Prop} \label{prop:transf-xy}
	The formal series solutions of equations \eqref{eq:xy-k} which define the companion map  are given by
	\begin{align} \label{eq:transf-x}
	x_k^\prime& = x_k +   \sum_{n\geq 0} \sum_{\alpha \models n} (-1)^{|\alpha|} [y^{-1}x]_{k-1}^\alpha  y_{k-1} = x_k  + y_{k-1} - 
	y_{k-2}^{-1}  x_{k-1} y_{k-1}  + \dots   ,\\ \label{eq:transf-y}
	y_k^\prime & = -   \sum_{n > 0} \sum_{\alpha \models n} (-1)^{|\alpha|} [y^{-1}x]_{k}^\alpha  y_{k} = 
	y_{k-1}^{-1}  x_k y_k -  y_{k-1}^{-1}  x_k \, y_{k-1}^{-1}  x_k y_k +
	y_{k-1}^{-1} y_{k-2}^{-1}  x_{k-1}  x_k y_k + \dots \;  .
	\end{align} 
\end{Prop}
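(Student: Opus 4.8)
The plan is to observe that equations \eqref{eq:xy-k} have already been reorganized, through the substitution \eqref{eq:x-h}--\eqref{eq:y-h}, into the single recurrence \eqref{eq:hxy-rec} for the auxiliary quantity $h_k$, and that both the formal series solution of that recurrence and its inverse are now available (formula \eqref{eq:hk-1} and Proposition~\ref{prop:h-1}). Consequently the two transition formulas \eqref{eq:transf-x}--\eqref{eq:transf-y} should follow by direct substitution, with no further fixed-point argument required.

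First I would treat $x_k^\prime$. Shifting the index in \eqref{eq:x-h} from $k+1$ down to $k$ gives $x_k^\prime = x_k + h_{k-1}^{-1}$, and inserting the expression for $h_{k-1}^{-1}$ from Proposition~\ref{prop:h-1} (with $k$ replaced by $k-1$) yields exactly \eqref{eq:transf-x}. To confirm the displayed low-order expansion it then suffices to expand the first homogeneous terms: the empty composition of $0$ contributes $y_{k-1}$, and the single composition $(1)$ of $1$ contributes $-y_{k-2}^{-1}x_{k-1}y_{k-1}$.

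For $y_k^\prime$ I would start from \eqref{eq:y-h} in the form $y_k^\prime = y_k - h_k^{-1}$ and again substitute Proposition~\ref{prop:h-1}. The one point requiring attention is to isolate the degree-zero ($n=0$) summand of $h_k^{-1}$: the empty composition gives $(-1)^0\,[y^{-1}x]_k^0\,y_k = y_k$, so that $h_k^{-1} = y_k + \bigl(\sum_{n>0}\sum_{\alpha\models n}(-1)^{|\alpha|}[y^{-1}x]_k^\alpha\bigr)y_k$. This leading $y_k$ cancels against the $y_k$ in \eqref{eq:y-h}, leaving precisely \eqref{eq:transf-y}, with its overall minus sign and its sum restricted to $n>0$.

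Because both formulas merely substitute series already known to be well-defined Mal'cev--Neumann series in the symbols $x_k,\,y_k^{-1}$ for $k\in\ZZ_K$, no convergence question arises beyond what was settled when $h_k$ was constructed; accordingly I do not expect a genuine obstacle here, only bookkeeping. The single hazard worth flagging is keeping the index shifts consistent (the $k-1$ subscripts appearing in $x_k^\prime$ against the $k$ subscripts in $y_k^\prime$) and correctly extracting the $n=0$ term in the second formula. As a closing consistency check I would verify directly that the resulting $x_k^\prime,\,y_k^\prime$ satisfy \eqref{eq:xy-k}: the additive relation $y_k + x_{k+1} = y_k^\prime + x_{k+1}^\prime$ is immediate from \eqref{eq:y-h}, while the multiplicative relation $x_k y_k = x_k^\prime y_k^\prime$ is equivalent to the recurrence \eqref{eq:hxy-rec} that $h_k$ solves by construction.
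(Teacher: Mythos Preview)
Your proposal is correct and mirrors exactly what the paper does: the paper states Proposition~\ref{prop:transf-xy} without a separate proof, treating it as the immediate consequence of substituting the series for $h_k^{-1}$ from Proposition~\ref{prop:h-1} into the defining relations \eqref{eq:x-h}--\eqref{eq:y-h}. Your handling of the index shift for $x_k^\prime$ and the extraction of the $n=0$ term for $y_k^\prime$ are both accurate, and the optional consistency check you describe is sound.
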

\begin{Rem}
	Equations \eqref{eq:x-h}-\eqref{eq:hxy-rec} imply another form of the companion map
	\begin{equation}
	x_k^\prime = h_{k-1}^{-1} y_k h_k, \qquad y_k^\prime = h_{k-1} x_k h_k^{-1} .
	\end{equation}
\end{Rem}
\begin{Rem}
Equations \eqref{eq:hxy-rec} and \eqref{eq:P-k-l-ind} imply
\begin{equation*}
h_k \, x_{k+1} x_{k+2} \dots x_{k+\ell+1} + \mathcal{P}_{k+1}^{(\ell)} = y_{k+1}y_{k+2} \dots y_{k+\ell+1} \, h_{k+\ell+1},
\end{equation*}
which for $\ell=K-1$ under $K$-periodicity assumption leads to following equation satisfied by $h_k$
\begin{equation}
 y_{k+1}y_{k+2} \dots y_{k+K} h_k =   h_k \,	x_{k+1} x_{k+2} \dots x_{k+K}  + \mathcal{P}_{k+1}^{(K-1)}   \,.
\end{equation}
By the successive approximation technique one can obtain the same series solution \eqref{eq:hk-1}. 
	Under additional condition of centrality of the products $x_{k+1} x_{k+2} \dots x_{k+K} $ and $y_{k+1}y_{k+2} \dots y_{k+K}$ the above equation was considered in~ \cite{Doliwa-YB}. Then its solution reads
	\begin{equation} \label{eq:transf-x-y-rat}
	x_k^\prime = \left[\mathcal{P}_{k}^{(K-1)}\right]^{-1} y_k \mathcal{P}_{k+1}^{(K-1)}, \qquad 
	y_k^\prime = \mathcal{P}_{k}^{(K-1)} x_k \left[\mathcal{P}_{k+1}^{(K-1)}\right]^{-1},
	\end{equation}
	and agrees with the corresponding result obtained in the fully commutative case~\cite{KNY-A,KNY-qKP}.
\end{Rem}
\begin{Rem}
	In the periodic reduction equations \eqref{eq:xy-alpha}-\eqref{eq:xy-beta}  and the matrix refactorization problem stated in Corollary~\ref{cor-alpha-beta} read 
	\begin{equation*}
	\left(  \begin{array}{ccccc}  
	x_1 & 0  & \cdots & 0 & 1 \\
	1    & x_{2} & 0 & \cdots & 0 \\
	0  &	1 & x_3  & \ddots &  \vdots  \\
	\vdots & \ddots & \ddots & \ddots &  0   \\
	0 & \cdots & 0 & 1 & x_{K} 
	\end{array} \right) 
	\left(  \begin{array}{ccccc}  
	y_1 & 0  & \cdots & 0 & 1 \\
	1    & y_{2} & 0 & \cdots & 0 \\
	0  &	1 & y_3  & \ddots &  \vdots  \\
	\vdots & \ddots & \ddots & \ddots &  0   \\
	0 & \cdots & 0 & 1 & y_{K} 
	\end{array} \right) 
	=
	\left(  \begin{array}{ccccc}  
	\alpha_1 & 0  & \cdots & 1 & \beta_K \\
	\beta_1    & \alpha_{2} & 0 & \cdots & 1 \\
	1  & \beta_2 & \alpha_3  & \ddots &  \vdots  \\
	\vdots & \ddots & \ddots & \ddots &  0   \\
	0 & \cdots & 1 & \beta_{K-1} & \alpha_{K} 
	\end{array} \right) .
	\end{equation*}	
Observe that every element $x_k$ or $y_k$ satisfies an equation in a form of a periodic continued fraction. For example, elimination of all variables except from $y_1$ gives
\begin{equation*}
y_1 = \beta_1 - \alpha_{2} \left( \beta_{2} - \alpha_{3} \left( \beta_{3} - \dots  - \alpha_{K} \left(  \beta_{K} - \alpha_{K-1} y_{1}^{-1} \right)^{-1} \dots \right)^{-1}\right)^{-1}
\end{equation*}
As it was observed in~\cite{Doliwa-NC-CF}, solution  of the refactorization problem can be interpreted as a non-commutative analog of the Galois theorem for periodic continued fractions. For more information concerning relation of continued fraction theory to integrable systems see for example \cite{Hirota-1983,IDS}, and especially \cite{NY-W-P} in the context of Weyl groups. Non-commutative continued fractions and integrable systems were considered also in~\cite{DiFrancesco}.
\end{Rem}

\subsection{Warning. New ``absurd formula"}
It turns out that if we rewrite the basic system \eqref{eq:hxy-rec} in the form
	\begin{equation}
	h_k = - x_{k+1}^{-1} + y_{k+1}h_{k+1} x_{k+1}^{-1},
	\end{equation}
	then the successive approximation technique gives its unique solution as
	\begin{equation} \label{eq:hk-2}
	h_k = - \sum_{n\geq 0}[yx^{-1}]_k^n x_{k+1}^{-1},
	\end{equation}
	where $[yx^{-1}]_k^n$ is defined recursively by
\begin{equation*}
[yx^{-1}]_k^0 = 1, \qquad [yx^{-1}]_k^{n+1} = y_{k+1}[yx^{-1}]_{k+1}^n x_{k+2}^{-1}, \qquad n\geq 0.
\end{equation*}	
Extending the notation for compositions in the same way as in the previous case, we obtain the solution	
	\begin{align}
	x_k^\prime& =- x_{k}\sum_{n > 0} \sum_{\alpha \models n} (-1)^{|\alpha|} [yx^{-1}]_{k-1}^\alpha   = x_k y_{k} x_{k+1}^{-1}  -
	 x_k y_{k} x_{k+1}^{-1} y_{k} x_{k+1}^{-1} +  x_k y_{k}  y_{k+1} x_{k+2}^{-1} x_{k+1}^{-1} + \dots   ,\\
	y_k^\prime & =  y_k + x_{k+1} \sum_{n \geq 0} \sum_{\alpha \models n} (-1)^{|\alpha|} [yx^{-1}]_{k}^\alpha = y_k + x_{k+1} - x_{k+1} y_{k+1} x_{k+2}^{-1} + \dots \;  ,
	\end{align} 
	which seems to contradict uniqueness of the companion map.
	
	This paradox can be explained as follows. Both expressions \eqref{eq:hk-1} and \eqref{eq:hk-2} for $h_k$ represent Mal'cev-Neumann series in different spaces and cannot be considered simultaneously. A similar situation happens with the following Euler's ``absurd formula''
	\begin{equation*}
	\sum_{n\in\mathbb{Z}}x^n = 0,
	\end{equation*}
	see detailed discussion in \cite{Cartier}.
	From various aspects of the above ``identity" given there, we can pick up the most relevant one in our context. When considered as rational expressions, the two parts of the series (which however again cannot be considered simultaneously on the series level)
	\begin{equation}
\sum_{n\geq 0} x^n = (1-x)^{-1}, 
\qquad \sum_{n < 0} x^n = \sum_{n> 0} (x^{-1})^n = x^{-1} (1-x^{-1})^{-1} ,
	\end{equation}
	summed up give the desired ``result''.
	
	One can, as a by-product of our considerations, add to the list new ``absurd formula"
	\begin{equation*}
	\sum_{n\in\ZZ} x^n y^n = 0,
	\end{equation*}
	where  $x$ and $y$ are non-commuting symbols (for commuting symbols it reduces to the above one). 
Notice that two  Mal'cev--Neumann series
\begin{equation*}
h^{(1)} = \sum_{n\geq 0} x^n y^n , \qquad h^{(2)} = - \sum_{n> 0} (x^{-1})^{n} (y^{-1})^{n},
\end{equation*}
$h^{(1)} \in \mathbb{Q} (( x,y))$, and 
$h^{(2)} \in \mathbb{Q} (( x^{-1},y^{-1}))$
satisfy equations
\begin{equation*}
h^{(1)} = 1 + x h^{(1)} y, \qquad  h^{(2)} = - x^{-1} y^{-1} + x^{-1} h^{(2)} y^{-1} ,
\end{equation*}
which can be easily transformed one into another.

\section{The companion map and context-free languages}
\label{sec:CF} 
The present Section arose from (unsuccessful) attempts to rewrite the transformation formulas \eqref{eq:transf-x}-\eqref{eq:transf-y} in a more compact form as a rational expression. As we mantioned in Section~\ref{sec:KP-series} such formulas \eqref{eq:transf-x-y-rat} have been  previously obtained under the centrality assumption in~\cite{Doliwa-YB} or in the fully commutative case in~\cite{KNY-A,KNY-qKP}. The answer came from the theory of formal languages \cite{HopcroftUllman}, where analogous questions were discussed \cite{Berstel-Reutenauer,Salomaa,KuichSalomaa} earlier. 

To make the paper self-contained let us recall relevant definitions~\cite{HopcroftUllman}. For a set $A$, called an \emph{alphabet}, we denote by $A^*$ the set consisting of all words (finite sequences of letters in $A$ including the empty sequence $\lambda$).
A context-free grammar $G(N,A,R,Y)$ is given by 
\begin{enumerate}
	\item an alphabet $N$ whose elements are called variables or nonterminals,
	\item an alphabet $A$ (disjoint from $N$) whose elements are called terminals,
	\item a finite set $R$ of rewritting rules or productions, each rule being a pair $Y_i \to \gamma$, where $Y_i \in N$ and $\gamma \in (A\cup N)^*$,
	\item an element $Y\in N$ called the initial variable.
\end{enumerate}
For two words $\gamma_1, \gamma_2$ in $(A\cup N)^*$ we write $\gamma_1 \Rightarrow \gamma_2$ iff $\gamma_1 = \delta_1 Y_i \delta_2$, $\gamma_2 = \delta_1 \gamma \delta_2$ and $Y_i \to \gamma$ is in $R$. Let $\stackrel{*}{\Rightarrow}$ be the reflexive transitive closure of $\Rightarrow$, then the language $L(G)$ generated by $G$ is
$$ L(G) = \{ w\in A^* | Y \stackrel{*}{\Rightarrow} w \} . $$ 
A context-free grammar is termed regular iff the righthand side of every production belongs to $AN \cup \lambda$. A language $L$ is referred to as context-free or regular iff $L=L(G)$ for some context-free or regular grammar, respectively.

To simplify formulas, in this Section we use the notation $\bar{x}_k = y_{k-1}^{-1}$. Then the system \eqref{eq:Z_k} reads
\begin{equation} \label{eq:Z_k-new} 
Z_k = 1 +  \bar{x}_k Z_{k-1} x_k, \qquad k=1,\dots, K.
\end{equation}
Let us consider the support $L_k$ of the series $Z_k$ as a subset of the set $\mathcal{X}^*_K$ consisting of all words over the alphabet $\mathcal{X}_K = \{ x_1, \dots ,x_K, \bar{x}_1, \dots \bar{x}_K \}$. 
\begin{Prop}
	The language $L_k$, $k=1,\dots , K$, is generated by the context-free grammar with variables $\{Y_1, Y_2, \dots , Y_K\}$, terminals $\{ x_1, \dots ,x_K, \bar{x}_1, \dots \bar{x}_K \}$, the production rules
	\begin{equation}
	Y_i\to \lambda, \qquad Y_i\to \bar{x}_i Y_{i-1}  x_i ,
	\end{equation}
	and the initial variable $Y_k$.
\end{Prop}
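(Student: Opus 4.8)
The plan is to prove the two inclusions $L(G,Y_k)\subseteq L_k$ and $L_k\subseteq L(G,Y_k)$ simultaneously by showing that both families of sets satisfy the same recursion and then reading off the common closed form. Throughout, indices are taken in $\ZZ_K$, so that the production $Y_i\to\bar x_i Y_{i-1} x_i$ with $i=1$ uses $Y_0:=Y_K$, matching the $K$-periodicity of the symbols.

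First I would pin down $L_k$ explicitly. By the fixed-point formula established above, $Z_k=\sum_{n\geq0}[y^{-1}x]_k^n$, and unrolling the recursive definition gives $[y^{-1}x]_k^n=\bar x_k\bar x_{k-1}\cdots\bar x_{k-n+1}\,x_{k-n+1}\cdots x_{k-1}x_k$, a single word of length $2n$ over $\mathcal{X}_K$. Since words arising from distinct $n$ have distinct lengths, no two of these monomials coincide; hence there is no cancellation in the series, every coefficient of $Z_k$ equals $1$, and
\[
L_k=\mathrm{supp}(Z_k)=\{\,[y^{-1}x]_k^n:n\geq0\,\}.
\]
This no-cancellation observation is the only genuinely necessary check, and it is what licenses passing from the series equation \eqref{eq:Z_k-new} to the set-level recursion $L_k=\{\lambda\}\cup\bar x_k L_{k-1}x_k$: left and right multiplication by a single letter are injective on words, and the monomial $\lambda$ contributed by the summand $1$ is disjoint from those of $\bar x_k Z_{k-1}x_k$.

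Next I would compute the language generated by the grammar. Since every production is linear (at most one nonterminal on the right-hand side), any leftmost derivation from $Y_k$ must begin either with $Y_k\to\lambda$, producing the empty word, or with $Y_k\to\bar x_k Y_{k-1}x_k$, after which the single occurrence of $Y_{k-1}$ independently derives an arbitrary word of $L(G,Y_{k-1})$, flanked by the fixed terminals $\bar x_k$ and $x_k$. Hence $L(G,Y_k)=\{\lambda\}\cup\bar x_k L(G,Y_{k-1})x_k$, exactly the recursion found for $L_k$. Unrolling a derivation makes the solution transparent: applying the second production $n$ times and then terminating with the first reproduces precisely the word $[y^{-1}x]_k^n$, and every derivation has this shape, so $L(G,Y_k)=\{[y^{-1}x]_k^n:n\geq0\}=L_k$.

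The argument is thus essentially a direct unrolling rather than a deep fact, and I expect the only delicate points to be bookkeeping: verifying the absence of cancellation (immediate from the length grading of the monomials) and correctly handling the periodic wrap-around $Y_0=Y_K$ when $k-1$ leaves the range $\{1,\dots,K\}$.
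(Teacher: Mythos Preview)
Your proof is correct. The paper states this proposition without proof, treating it as an immediate consequence of the system~\eqref{eq:Z_k-new} and the definition of a context-free grammar; your argument---observing that distinct $n$ yield words of distinct lengths so that $L_k=\{[y^{-1}x]_k^n:n\geq0\}$, and that the grammar's linear productions force every derivation to unroll to exactly such a word---is precisely the routine verification the paper leaves implicit.
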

\begin{Ex}
	In the simplest case $K=1$ the unique solution of the equation (we skip the subscripts)
	\begin{equation} \label{eq:ZZ}
	Z = 1 + \bar{x}Z x,
	\end{equation}
	is the series 
	\begin{equation} \label{eq:ZZ-ser}
	Z= \sum_{n=0} \bar{x}^n x^n = 1 + \bar{x} x + \bar{x} \bar{x} x x  + 
	\bar{x} \bar{x} \bar{x} x x x + \dots ,
	\end{equation}
	with the support $
	L = \{ \bar{x}^n x^n \}_{n\geq 0} = \{ \lambda, \bar{x} x , \bar{x} \bar{x} x x ,
	\bar{x} \bar{x} \bar{x} x x x ,\dots  \}$.
The language $L$ is generated by the grammar with variable $Y$, which is also the initial symbol, terminals $\{ \bar{x}, x\}$ and the productions
	\begin{equation}
	Y\to \lambda, \qquad Y\to \bar{x}Y x .
	\end{equation}
\end{Ex}
	The above language is the standard and the simplest example of a context-free language, which is not regular. In the theory of computing machines this means that it can be accepted by a push-down automaton but cannot be accepted by a finite-state automaton (see \cite{HopcroftUllman} or any textbook of the theory of formal languages). This can be shown by the so-called pumping lemma, and the arguments go over to the case of languages $L_k$. By the known relation between regular languages and rational series~\cite{Berstel-Reutenauer,Salomaa,KuichSalomaa} the corresponding solutions $Z_k$ to system \eqref{eq:Z_k-new} cannot be written as  rational expressions. 

Recall that to find transformation formulas we had to calculate inverses of the series $Z_k$, $k=1,\dots , K$. We remark that the inverse $Z^{-1}$ of the series \eqref{eq:ZZ-ser} was calculated in \cite{BRRZ}. By Proposition~\ref{prop:h-1} the support of the series $Z_k^{-1}$ is the language
\begin{equation}
L_k^* = \bigcup_{n\geq 0} \{ [\bar{x}x]_k^\alpha \; | \; \alpha\models n \} ,
\end{equation}
whose elements are all possible concatenations of words of $L_k$.
Notice that elements of languages $L_k^*$ are in bijection with compositions. From that point of view the elements of the language $L$ of the above example can be put in bijection with generators of the algebra of the non-commutative symmetric functions~\cite{NC-sym,Introduction-QSym}. This Hopf algebra is a generalization of the the algebra of symmetric functions~\cite{Macdonald}, which plays pivotal role in the theory of KP hierarchy~\cite{Miwa-Jimbo-Date}. The corresponding extension for $K>1$ of the Hopf algebra of the non-commutative symmetric functions together with its graded dual (quasi-symmetric functions) was the subject of~\cite{Doliwa-QS}.

\section{Conclusion and open problems}
Starting from periodic reduction of the KP-map for non-commuting symbols we have constructed its companion map in terms of Mal'cev--Neumann series which cannot be reduced to rational expressions. By multidimensional consistency of the KP map our formulas give a realization of the $A$-type Coxeter relations. We studied the action of the finite Coxeter group only, leaving for further research the case of affine Coxeter groups and their eventual connection~\cite{NY-W-P} with non-commutative discrete Painlev\'{e} equations~\cite{Dol-qP6,Hasegawa,Kuroki}. 

An interesting output of our research is the relation of the companion to KP map in the periodic reduction to theory of formal languages. The type of series in non-commuting variables, we have encountered in the present research, corresponds to the so called context-free languages and push-down automata. This aspect certainly deserves deeper studies, especially in the spirit of current applications of integrable systems to combinatorics.

\section*{Acknowledgments}
One of authors (A.D.) would like to thank dr.~Aleksandra Ki\'{s}lak-Malinowska for pointing out that the language $\{ a^n b^n \}_{n\geq 0}$ is not regular. The research of A.D. was supported by National Science Centre, Poland, under grant 2015/19/B/ST2/03575 \emph{Discrete integrable systems -- theory and applications}.
The work of M.N. was supported by JSPS Kakenhi Grant (B)15H03626.

\bibliographystyle{amsplain}

\providecommand{\bysame}{\leavevmode\hbox to3em{\hrulefill}\thinspace}

\end{document}